\numberwithin{equation}{section}
\definecolor{dullmagenta}{rgb}{0.4,0,0.4}   
\definecolor{darkblue}{rgb}{0,0,0.4}
\def\dblue#1{\textcolor[rgb]{0,0,0.7}{#1}}
\newcommand{\C}{{\mathbb C}}
\newcommand{\D}{{\mathbb D}}
\renewcommand{\phi}{\varphi}
\renewcommand{\epsilon}{\varepsilon}
\newcommand{\Pol}{\mathrm{Pol}_{n,q}}
\newcommand{\trace}{\mathrm{trace}}
\newcommand{\fluct}{\mathrm{fluct}}
\newtheorem{thm}{Theorem}[section]
\newtheorem{prop}[thm]{Proposition}
\newtheorem{lemma}[thm]{Lemma}
\newtheorem*{thm-others}{Theorem}
\theoremstyle{remark}
\renewcommand{\descriptionlabel}[1]%
         {\dblue{#1:}\\}
\title[Fluctuations in Polyanalytic Ginibre Ensembles]{A Central limit theorem for fluctuations in Polyanalytic Ginibre ensembles} 
\author{Antti Haimi}
\author{Aron Wennman}
\address{Faculty of Mathematics, University of Vienna \newline Oskar-Morgenstern-Platz 1, A-1090 Vienna, Austria}
\address{Department of Mathematics, KTH Royal Institute of Technology,\newline Stockholm, 100 44, Sweden}
\thanks{A.H. was supported by Lise Meitner grant of Austrian Science Fund (FWF)}
\begin{document}

\begin{abstract}
We study fluctuations of linear statistics in Polyanalytic Ginibre ensembles,  a family of point processes describing planar free fermions in a uniform magnetic field 
at higher Landau levels. Our main result is asymptotic normality of fluctuations, extending a result of Rider and Vir\'{a}g.
As in the analytic case, the  variance is composed of independent terms from the bulk and the boundary.  
Our methods rely on a structural formula for polyanalytic polynomial Bergman kernels which separates out the different pure $q$-analytic kernels corresponding to different Landau levels. 
The fluctuations with respect to these pure $q$-analytic Ginibre ensembles are also studied, and a central limit theorem is proved. 
The results suggest a stabilizing effect on the variance when the different Landau levels are combined together.
\end{abstract}
\maketitle

\section{Introduction}
The Ginibre ensemble is one of the major point processes in random matrix theory and mathematical physics. The model has at least three  possible interpretations: in terms of eigenvalues
of random matrices, Coulomb gas of charged particles in an external field, or ground state free fermions in a magnetic field perpendicular to the plane. 
In this paper, we study \emph{Polyanalytic Ginibre ensembles} \cite{haimi2013polyanalytic},  
a family of point process which generalizes the last of these three notions so that the particles are allowed to occupy more general energy levels. 

From the point of view of quantum mechanics, we can arrive at the Polyanalytic Ginibre ensembles in the following way. It is a consequence of the Pauli exclusion principle that 
the wavefunction of the $N$-body system of free (i.e. non-interacting) particles is given by the Slater determinant
\begin{equation} \label{eq:wavefunction}
\det[\psi_{j}(z_j)]_{1 \leq i,j \leq N}
\end{equation}
where $\psi_j$ are orthonormal single particle wave functions. We take these to be eigenstates of the \emph{Landau Hamiltonian} 
$$
H_B:= \frac12 \bigg( \bigg( \frac{\partial}{\partial x} - \frac{B}{2}y \bigg)^2 + \bigg( i \frac{\partial}{\partial y}- \frac{B}{2}x \bigg)^2 \bigg),
$$
which is known to describe a single electron in the plane subjected to a perpendicular magnetic field of strength $B$. It is known  (\cite{eddine1997espaces}, \cite{abreu2015discrete}) that the spectrum of this operator (acting on $L^2(\mathbb{R}^2)$) consists of 
eigenvalues (so called \emph{Landau levels})
$$
e^B_{q}= (q+1/2)B
$$
and the eigenspace corresponding to $e^B_q$ consists of functions of the form 
$$f(z)e^{-B|z|^2/2},$$ 
where $f$ belongs to \emph{pure $q$-analytic Bargmann-Fock space}
$A^2_{\delta; B,q}$, 
defined as the orthogonal difference $A^2_{B,q} \ominus
A^2_{B, q-1}$ between two consequtive \emph{$q$-analytic Bargmann-Fock 
spaces} 
$$
A^2_{B,q}:= \left\{f: \int_{\mathbb{C}} |f(z)|^2 e^{-B|z|^2} dA(z) < \infty,\, \bar \partial^q f(z)=0 \right\}.
$$
The case of the Ginibre ensemble (i.e. $q=1$) corresponds to the classical Bargmann-Fock space, which is associated with the  orthonormal basis 
$$
\psi_j(z)= \frac{B^{1/2}}{\sqrt{j!}} (B^{1/2}z)^je^{-B|z|^2/2}, \quad 0 \leq j.
$$
According to a well-known computation in point process theory, the probability density corresponding to the many-body wavefunction in \eqref{eq:wavefunction} can be written (up to a constant) in a determinantal 
form:
$$
\bigg| \det[\psi_{j}(z_j)]_{1 \leq i,j \leq N} \bigg|^2 \sim \frac{1}{N!} \det[K(z_i,z_j)]_{1 \leq i,j \leq N}e^{-B|z_1|^2- \cdots - B|z_N|^2},
$$
where 
$$
K(z,w)= \sum_{j=0}^{N-1} \phi_j(z) \overline{\phi_j(w)}.
$$
In the Ginibre case, this coincides with the reproducing kernel of the space of analytic polynomials of degree $\leq N-1$ in $L^2(e^{-B|z|^2} dA(z))$.

To obtain the Polyanalytic Ginibre ensembles, we allow wavefunctions from general eigenspaces of $H_B$, not just the from lowest level. The point processes that we will 
consider fall into to two categories: \emph{full type} and \emph{pure type}. In the first case, we have $n$ particles at each level up to $q$ and 
and in the second $n$ particles at $q$'th level only. So, the processes of full type contain $nq$ points, and those of the pure type consist of $n$ points. 
It is natural to take the field  $B$ to be equal to the number of particles $n$ at each level; physically, 
this corresponds to each level being completely filled.  The corresponding reproducing kernels  are formed by choosing the appropriate wavefunctions from $A^2_{B,q}$ and $A^2_{\delta;B,q}$  and 
will be denoted by $K_{n,q}$ and $K_{\delta; n, q}$. They correspond to spaces
$$
\Pol=\mathrm{span} \big\{ \bar{z}^r z^j: 0 \leq j \leq n-1, 0 \leq r \leq q-1 \big\} \subset L^2(e^{-n|z|^2} dA)
$$
and
$$
\delta\Pol: =\Pol\ominus\mathrm{Pol}_{n,q-1},
$$
respectively. We see that when we allow higher Landau levels in the process, the function spaces 
do not only consist of analytic functions as in the Ginibre case, but 
rather more general \emph{polyanalytic functions}. The study of these functions has attracted increasing attention recently, and interesting connections between signal analysis, 
quantum mechanics and complex analysis have been found. For an introduction, see \cite{abreu2014function} or \cite{abreu2015discrete}.


\begin{figure} 
\centering
\begin{subfigure}{.3\textwidth}
  \centering
  \includegraphics[width=\linewidth]{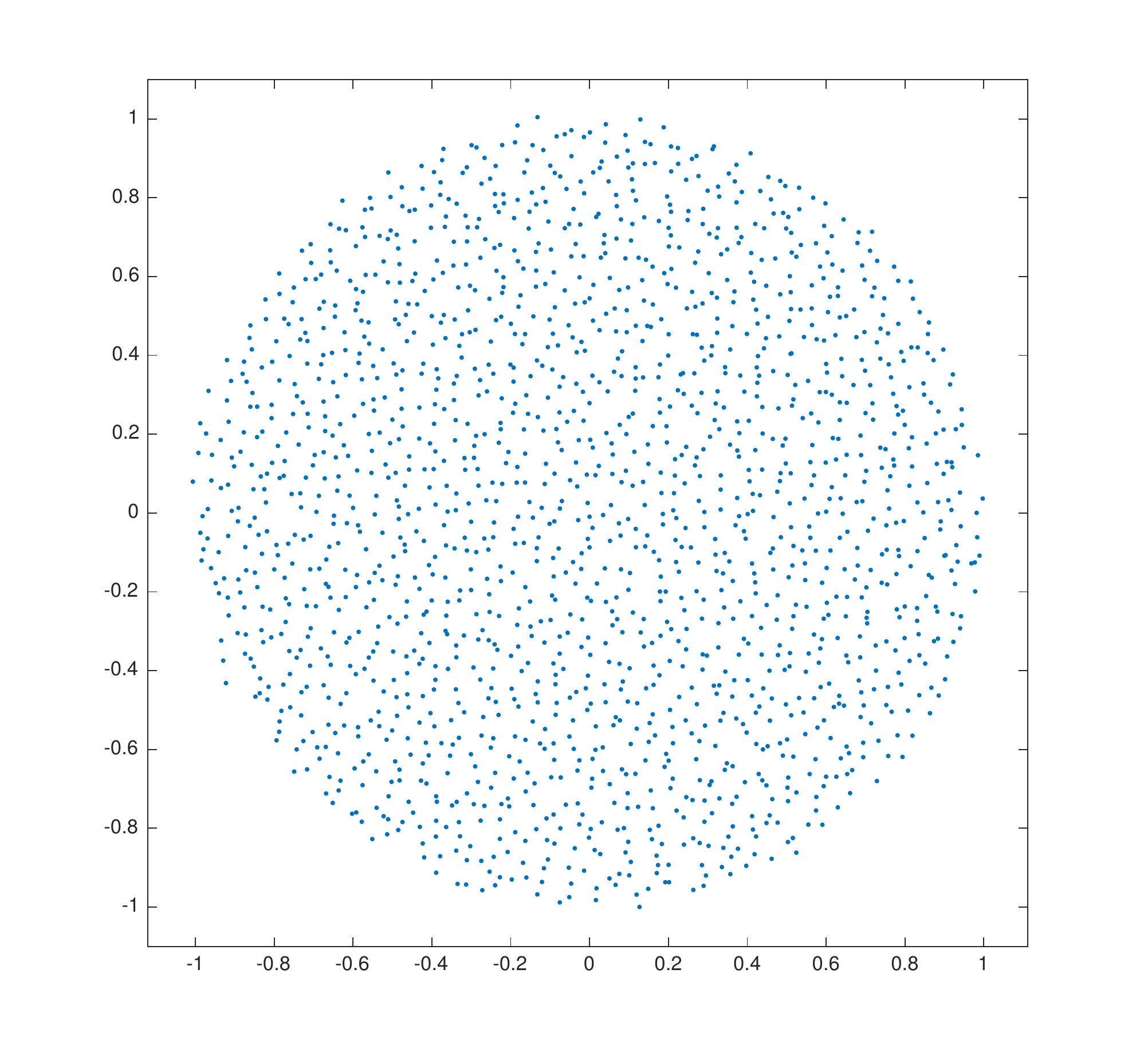}
  \caption{$n=1700$, $q=1$}
  \label{fig:sub1}
\end{subfigure}%
\begin{subfigure}{.3\textwidth}
  \centering
  \includegraphics[width=\linewidth]{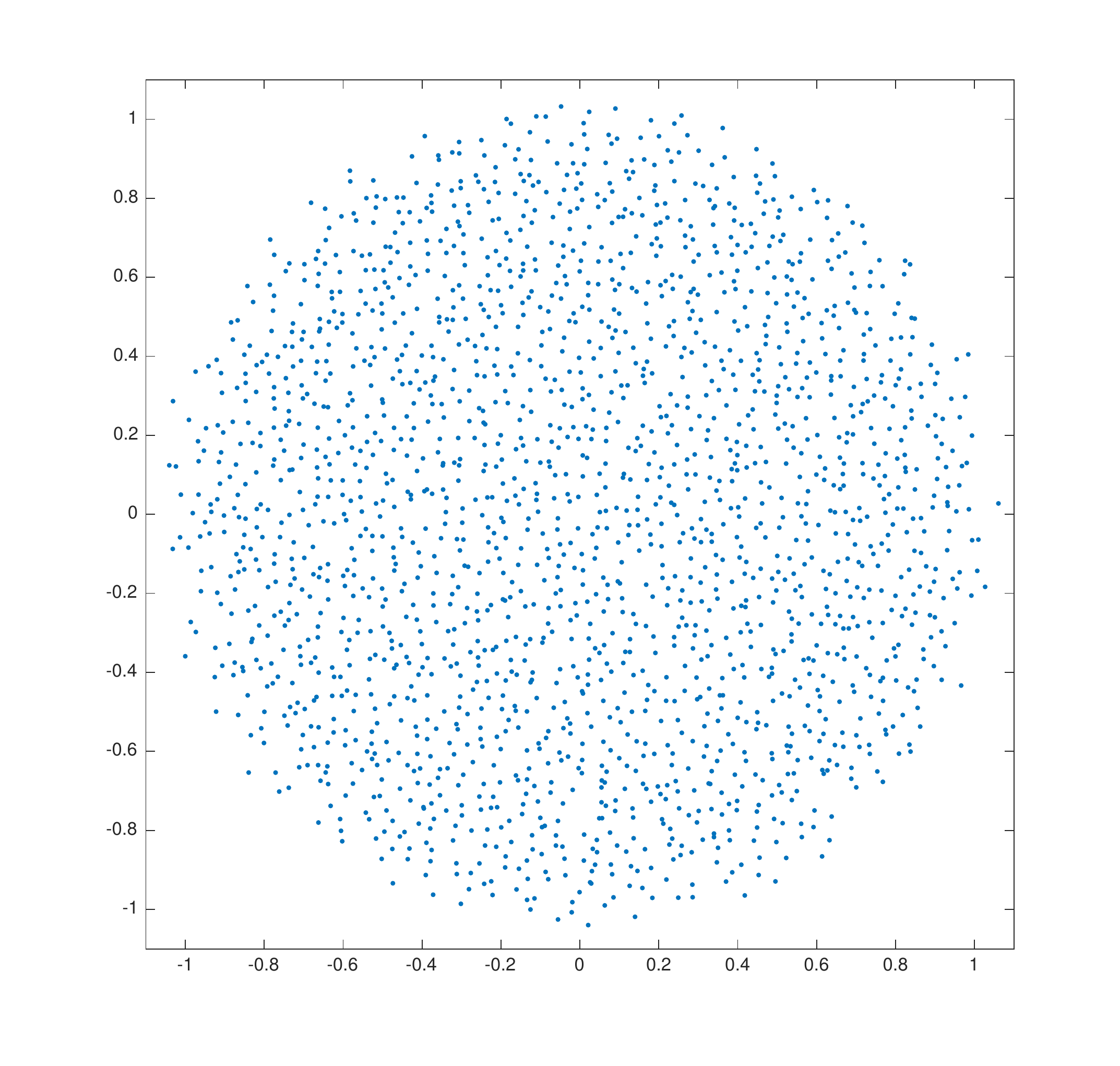}
  \caption{$n=500$, $q=4$}
  \label{fig:sub2}
\end{subfigure}
\begin{subfigure}{.3\textwidth}
  \centering
  \includegraphics[width=\linewidth]{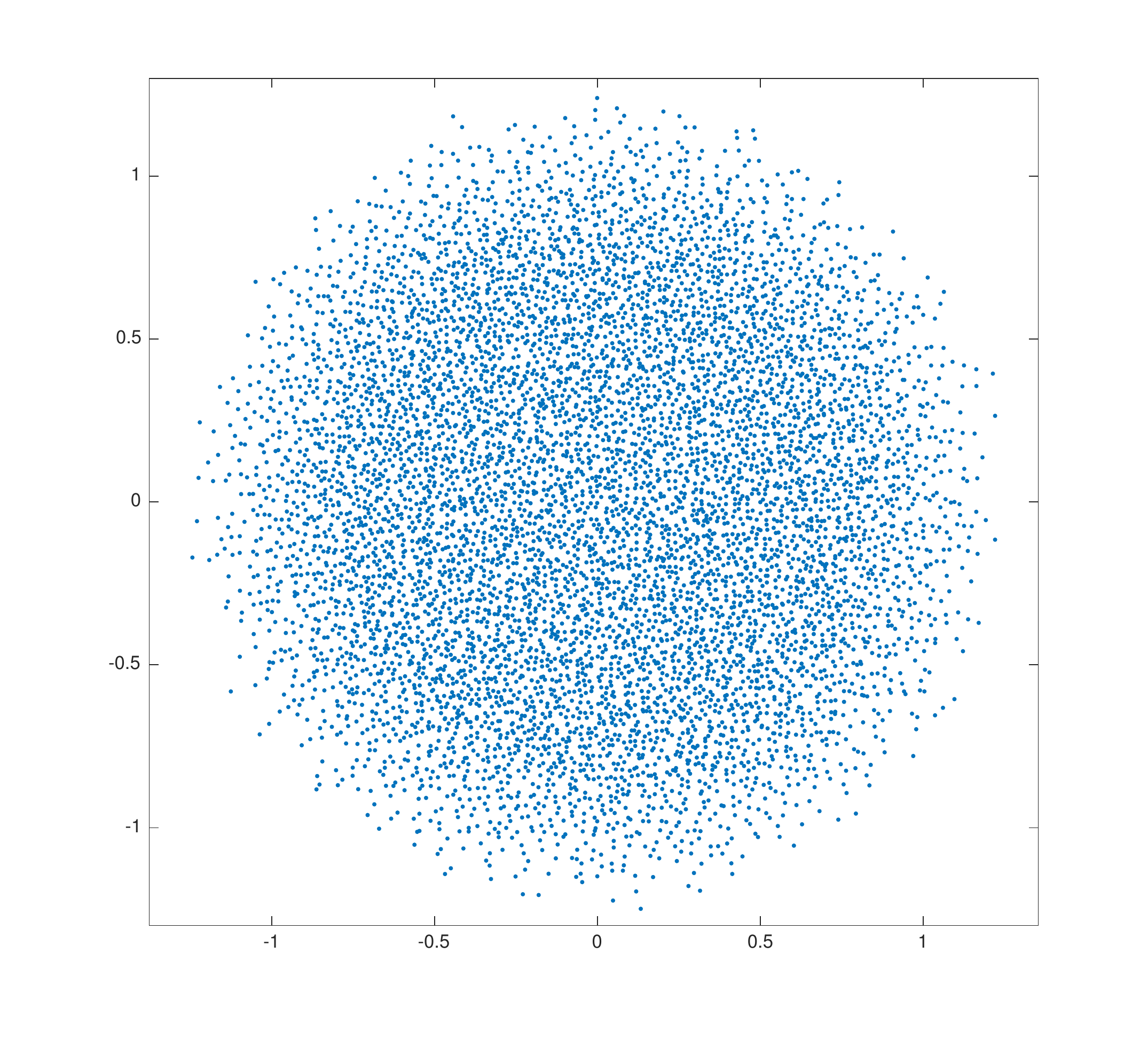}
  \caption{$n=300$, $q=25$}
  \label{fig:sub3}
\end{subfigure}

\caption{Polyanalytic Ginibre ensembles defined by the kernel $K_{n,q}$ with different values of $q$ and $n$. Notice that when $q$ is relatively large compared to $n$, the density is less uniform. }
\label{fig:SimPoly}
\end{figure}

The study of processes of this type was initiated in mathematics literature in \cite{haimi2013polyanalytic}, where precise estimates and scaling limits for the kernels $K_{\delta;n,q}$ and $K_{n,q}$
were obtained when $n \to \infty$ and $q$ is fixed. In particular, it follows from the results there 
that the the circular law will appear in the limit also when higher Landau levels are included (see Figure \ref{fig:SimPoly}). 
Our main theorem shows asymptotic normality of fluctuations around this mean, generalising a result of  Rider and Vir\'{a}g \cite{rider2007noise} from the analytic case. For a continuous test function, we define the linear statistics associated with processes of the pure type 
by
$$
\trace_{\delta;n,q} f = \sum_{j=1}^{n} f(\lambda_j),
$$
where the vector $(\lambda_j)_{j=1}^n$ is picked from the determinantal process defined by the kernel $K_{\delta;n,q}$. The random variables $\trace_{n,q}$ are defined similarly: 
$$
\trace_{n,q} f = \sum_{j=1}^{nq} f(\lambda_j).
$$
Here we have taken $nq$ points from the process defined by $K_{n,q}$. The corresponding fluctuations are defined by
$$
\mathrm{fluct}_{\delta;n,q} = \trace_{\delta; n,q} - \mathbb{E} (\trace_{\delta; n,q}), \qquad \mathrm{fluct}_{n,q}= \mathrm{trace}_{n,q}-\mathbb{E}(\mathrm{trace}_{n,q}).
$$
We denote by $N(a,\sigma^2)$ a normal variable with mean $a$ and variance $\sigma^2$.  

\begin{thm} \label{maintheorem}
Let $g \in C^{\infty}_0(\C)$ be real-valued. We have
\begin{equation}  \label{eq:purevariance}
\mathrm{fluct}_{\delta; q,n}g \to N\big(0, (2q-1) \lVert g \rVert_{H^1(\mathbb{D})}^2 + \frac12 \lVert g \rVert_{H^{1/2}(\partial \mathbb{D})}^2\big),
\end{equation}
and
\begin{equation} \label{eq:fullvariance}
\mathrm{fluct}_{q,n}g \to N\big(0, q ( \lVert g \rVert_{H^1(\mathbb{D})}^2 + \frac12 \lVert g \rVert_{H^{1/2}(\partial \mathbb{D})}^2 \big), 
\end{equation}
in distribution as $n \to \infty$. Here,
$$
\lVert g \rVert_{H^1(\mathbb{D})}^2:= \int_{\mathbb{D}} |\bar \partial g \rvert^2 dA(z)
$$
is the Dirichlet semi-norm of $g$ on $\mathbb{D}$ and 
$$
\lVert g \rVert_{H^{1/2}(\partial \mathbb{D})}^2\:= \sum_{k \in \mathbb{Z}} |k| |\hat{g}(k)|^2, 
$$
is the $H^{1/2}$ semi-norm of $g$ on the unit circle. The numbers 
$$\hat{g}(k):= \frac{1}{2 \pi} \int_0^{2 \pi} g(e^{i \theta}) e^{-ik \theta} d \theta$$ 
are the 
Fourier coefficients of the $g$ restricted to $\{ |z|=1 \}$. 
\end{thm}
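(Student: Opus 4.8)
The plan is to run the cumulant method for determinantal point processes, feeding in the sharp kernel asymptotics of \cite{haimi2013polyanalytic} and the orthogonal decomposition of $\Pol$ into pure $q$-analytic pieces. For a determinantal process with Hermitian, locally trace-class kernel $K$ (regarded as an operator $\mathcal{K}$) and a bounded compactly supported $h$, the cumulant generating function of $\trace h$ is $\tr\log\!\big(I+(e^{h}-1)\mathcal{K}\big)$; setting $h=tg$ and reading off powers of $t$ writes the $j$-th cumulant $C_j$ as a finite sum of traces of alternating products of multiplication operators $M_{g^{k}}$ and $\mathcal{K}$. It is classical (Costin--Lebowitz, Soshnikov) that it suffices to show $C_2\to\sigma^2$ and $C_j\to 0$ for every $j\ge 3$, the first cumulant vanishing after centering and the limiting Gaussian being moment-determinate. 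Because $\mathcal{K}_{\delta;n,q}$ and $\mathcal{K}_{n,q}$ are orthogonal projections, the variance collapses to $C_2(\trace g)=\tfrac12\,\|[M_g,\mathcal{K}]\|_{\mathrm{HS}}^2=\tfrac12\iint|g(z)-g(w)|^2\,|K(z,w)|^2\,dA(z)\,dA(w)$.

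\emph{Pure type.} I would insert the bulk and edge asymptotics of $K_{\delta;n,q}$ into this double integral, splitting the domain into the bulk of $\mathbb{D}$, a collar of width $\sim n^{-1/2}$ about $\partial\mathbb{D}$, and the (negligible) exterior. In the bulk, the substitution $w=z+\zeta/\sqrt n$ produces the pure $q$-analytic Ginibre limit kernel, whose squared modulus is a constant times $L_{q-1}(|\zeta|^2)^2 e^{-|\zeta|^2}$; together with $|g(z)-g(w)|^2 = n^{-1}\big(2|\bar\partial g(z)|^2|\zeta|^2 + R\big)$, where $R$ is an oscillatory remainder proportional to $\zeta^2$ that integrates to zero against the radial kernel, and the Laguerre identity $\int_0^\infty L_{q-1}(x)^2\,x\,e^{-x}\,dx = 2q-1$, this gives the bulk term $(2q-1)\,\|g\|_{H^1(\mathbb{D})}^2$. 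In the collar, the error-function/Hermite edge profile of \cite{haimi2013polyanalytic} and a Fourier expansion of $g$ along $\partial\mathbb{D}$ assemble into $\tfrac12\sum_k|k|\,|\hat g(k)|^2=\tfrac12\|g\|_{H^{1/2}(\partial\mathbb{D})}^2$; the structural point is that the leading edge behaviour yields the \emph{same} universal constant $\tfrac12$ at every Landau level. It remains to check that the mixed bulk--collar and exterior regions are $o(1)$, giving \eqref{eq:purevariance} at the level of the variance.

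\emph{Full type.} Here I would use $\mathcal{K}_{n,q}=\sum_{r=1}^{q}\mathcal{K}_{\delta;n,r}$, coming from $\Pol=\bigoplus_{r=1}^{q}\delta\mathrm{Pol}_{n,r}$, so that $|K_{n,q}(z,w)|^2=\sum_{r}|K_{\delta;n,r}(z,w)|^2+2\sum_{r<s}\Re\big(K_{\delta;n,r}(z,w)\,\overline{K_{\delta;n,s}(z,w)}\big)$. The diagonal sum gives $q^2\|g\|_{H^1(\mathbb{D})}^2+\tfrac q2\|g\|_{H^{1/2}(\partial\mathbb{D})}^2$; for the cross terms, the same bulk rescaling reduces each to (a fixed multiple of) $\int_0^\infty L_{r-1}(x)L_{s-1}(x)\,x\,e^{-x}\,dx$, which by the three-term recurrence for Laguerre polynomials vanishes unless $s=r+1$, where it equals $-r$. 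Summing, the cross terms contribute $-q(q-1)\|g\|_{H^1(\mathbb{D})}^2$ in the bulk and — once it is checked that they produce nothing in the collar — nothing at the edge, leaving the full-type variance $q\big(\|g\|_{H^1(\mathbb{D})}^2+\tfrac12\|g\|_{H^{1/2}(\partial\mathbb{D})}^2\big)$, which is the stabilization predicted in the abstract.

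\emph{Higher cumulants, and the main obstacle.} For $j\ge 3$ I would expand $C_j$ into its fixed number of trace terms $\tr\!\big(M_{g^{k_1}}\mathcal{K}\cdots M_{g^{k_m}}\mathcal{K}\big)$, estimate each via the pointwise kernel bounds of \cite{haimi2013polyanalytic} (Gaussian off-diagonal decay at scale $n^{-1/2}$ in the bulk, error-function decay in the collar), and exploit the smoothness of $g$ and the cancellation structure of the cumulant combinatorics — reducing to a Fourier-mode count as in \cite{rider2007noise} — to conclude $C_j\to 0$. Combining this with the variance computations and moment-determinacy of the Gaussian yields \eqref{eq:purevariance} and \eqref{eq:fullvariance}. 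The principal difficulty throughout is the boundary analysis: one needs the edge asymptotics of $K_{\delta;n,q}$, and of the cross kernels $K_{\delta;n,r}\overline{K_{\delta;n,s}}$, sharp and uniform enough to pin down the constant $\tfrac12$ and its $q$-independence, to show the cross terms leave the boundary term untouched, and to keep the higher-cumulant estimates valid in the collar; by comparison the bulk is essentially the algebra of Laguerre polynomials.
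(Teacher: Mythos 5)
Your proposal takes a genuinely different route from the paper. You propose to insert the bulk and edge scaling limits of $K_{\delta;n,q}$ (and of the cross products $K_{\delta;n,r}\overline{K_{\delta;n,s}}$) directly into the Costin--Lebowitz cumulant formulas. The paper deliberately avoids this: it writes $K_{\delta;n,q}=[T_{n,q-1}]_z\overline{[T_{n,q-1}]}_wK_n$ and integrates the raising operators by parts onto the test function (Proposition \ref{prop:crossterms}), so that every cumulant becomes a finite sum $\sum_j n^{-j}\int H_j\,K_n(z_1,z_2)\cdots K_n(z_k,z_1)\,d\mu_n^k$ involving only the \emph{analytic} Ginibre kernel. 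The $j=0$ term is then literally the $q=1$ cumulant, so both the boundary term $\tfrac12\lVert g\rVert_{H^{1/2}}^2$ and the vanishing of $C_k$ for $k\ge3$ are inherited from Rider--Vir\'ag as a black box; the $j=1$ corrections vanish on the diagonal (by the identities for $\sum_j\Delta_{z_j}G_k$ and $\sum_{j\ne l}\partial_j\bar\partial_lG_k$ from \cite{ameur2011fluctuations} plus Lemma \ref{lem:diagonal}) and hence contribute only bulk terms through the density $\tfrac1nK_n(z,z)d\mu_n\to1_{\D}$, and the $j\ge2$ terms die by the $L^1$--$L^\infty$ bound of Proposition \ref{prop:projection-est}. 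Your bulk computations are correct and genuinely illuminating: the identities $\int_0^\infty xL_{q-1}(x)^2e^{-x}dx=2q-1$ and $\int_0^\infty xL_{r-1}(x)L_{s-1}(x)e^{-x}dx=-r$ for $s=r+1$ (else $0$ for $|r-s|\ge2$) do explain where $2q-1$ and the cancellation down to $q$ come from, and they are consistent with what the paper's computation produces in disguise.

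However, as written the proposal has two genuine gaps, both of which you flag but neither of which you close, and they are exactly the steps the paper's integration-by-parts device is built to circumvent. First, the boundary term: you assert that the edge profile of $K_{\delta;n,q}$ ``assembles into'' $\tfrac12\sum_k|k||\hat g(k)|^2$ with a $q$-independent constant, and that the cross kernels contribute nothing at the edge. This is the hardest part of even the $q=1$ proof, the $H^{1/2}$ norm is not a local quantity that falls out of a collar-times-edge-profile computation, and no argument is offered for the claimed universality across Landau levels; without it the constant $\tfrac12$ in \eqref{eq:purevariance} and the coefficient $q$ (rather than something else) on the boundary term in \eqref{eq:fullvariance} are unproved. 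Second, the higher cumulants: ``a Fourier-mode count as in \cite{rider2007noise}'' is a plan, not a proof --- Rider and Vir\'ag's argument leans on the monomial basis of the analytic Fock space, and extending it to the bases $T_{n,q-1}(z^j)$ of $\delta\mathrm{Pol}_{n,q}$, together with uniform edge control of all the cyclic products, is a substantial undertaking. The paper's lesson is that one never needs new edge asymptotics at all: after moving the raising operators onto $G_k$, every correction beyond the Rider--Vir\'ag term carries a factor $n^{-1}$ and a function vanishing on the diagonal, so it only ever samples the bulk. If you want to salvage your route, the honest statement is that the bulk analysis is complete but the edge analysis and the $k\ge3$ estimates would each require an argument comparable in length to the whole of the paper's Sections \ref{sec:cumu}--\ref{sec:proof}.
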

We observe that the variance consists of essentially the same two terms as in the case $q=1$, 
the bulk term and the boundary term. 
However, the coefficients in front of the two terms bring in new phenomena. In the formula \eqref{eq:fullvariance}, the variance from the analytic case is just multiplied by $q$. In the pure polyanalytic case \eqref{eq:purevariance}, only the bulk term involves a factor depending on $q$ and the boundary is the same as in the analytic setting. Moreover, 
if we concentrate only on test functions which are supported in the bulk, the variance in 
\eqref{eq:purevariance} is higher than in \eqref{eq:fullvariance}. In fact, because
$\frac1q \sum_{r=1}^q (2r-1)=q$, the variance in \eqref{eq:fullvariance} is obtained by averaging the variances from each of the $q$ Landau levels. 
This suggests that adding several Landau levels together has a certain smoothening effect on the variance. Interestingly, 
for the boundary terms the situation is different, because the boundary contribution in  \eqref{eq:fullvariance} is just the sum of boundary contributions from the individual levels. 
We do not have a physical explanation for these facts at the moment. 

As in \cite{rider2007noise} and \cite{ameur2011fluctuations}, the proof is based on the cumulant method introduced by Costin and Lebowitz
\cite{costin1995gaussian}. Otherwise the argument is quite different. Intead of using explicit expression for the correlation kernel or estimates of it directly, 
our proof starts with a partial integration procedure (Proposition \ref{prop:crossterms}), based on expressing the kernels in terms of quantum mechanical raising operators
which act isometrically between Landau level eigenspaces. As a result, we can rewrite the cumulants in terms of the Ginibre kernel only. This representation combined with 
rather general estimates
of integrals involving cyclic products of kernels allows us to make a reduction to the analytic situation in \cite{rider2007noise}. We want to emphasize that even though precise estimates
and an explicit expressions for our polyanalytic kernels are known, applying formulas for the cumulants directly would most likely lead to be extremely complicated calculations. 

In this paper, the function spaces are defined with a Gaussian weight only. In the analytic case, the result of Rider and Vir\'{a}g has been generalised to more general weights 
by Ameur, Hedenmalm and Makarov (in  \cite{ameur2011fluctuations} for test funtions with support in the bulk, and in \cite{ameur2015random} for general test functions). 
While it is reasonable to expect that results of the former type can be extended to polyanalytic setting by using estimates from \cite{haimi2014bulk}, 
it remains an interesting question what happens with general test functions. The technique in \cite{ameur2015random} seems not to generalize to our polyanalytic case. 
On the other hand, our approach is based on expressing the polyanalytic kernels in terms of the Landau level raising operators and these are closely tied to the Gaussian case. 

Our methods could be used to study other configurations of particles as well, not just those where either one level or all levels up to a given level are filled. However, our techniques do
require the highest level $q$ to be fixed as we let $n \to \infty$. In a forthcoming paper we will study asymptotic behaviour of the kernel $K_{n,q}$ when both $q$ and $n$ tend to infinity. 
Some preliminary observations about this setting can be found in \cite{haimi2013polyanalytic}.

The paper is organised as follows. In Section \ref{sec:prelim}, we provide basic facts about the function spaces and point processes involved.  In Section \ref{sec:cumu}, 
we introduce cumulants and our main technique, the partial integration procedure in Proposition \ref{prop:crossterms}. In Section \ref{sec:cumuest}, we prove certain estimates for 
integrals involving cyclic products of kernels that allow us to estimate the cumulants. The main theorem is then proved in Section \ref{sec:proof}. 

\subsection*{Acknoledgements} 
We would like to thank Joaquim Ortega-Cerd\`{a} for sharing the code which was used to produce the simulations
in Figure \ref{fig:SimPoly} and Seung-Yeop Lee for interesting remarks concerning the main theorem. 

\section{Preliminaries} \label{sec:prelim}

\subsection{Notation}
We write $z= x+iy$ and let $\partial= \frac12 (\partial_x - i \partial_y)$ and $\bar \partial= \frac12 (\partial_x + i \partial_y)$ denote the standard Wirtinger differential operators. 
We will write $\mathbb{D}$ for the open unit disk. We will write $dA(z)= \frac{1}{\pi} dxdy$
and let $d \mu_n(z)=e^{-n|z|^2} dA(z)$ be the Gaussian measure on $\C$. We will use the notation 
$d \mu_n^k(z_1, \ldots, z_k)=d\mu_n(z_1) \ldots d\mu_n(z_k)$. 
\subsection{Spaces of polyanalytic polynomials}
A function $f$ defined on a subset of the complex plane is called $q$-analytic if it satisfies the equation $\bar \partial^q f=0$ in the sense of distributions. Equivalently, a function $f$ is $q$-analytic if it can be decomposed as
$$
f(z)= \sum_{j=0}^{q-1} \bar{z}^j f_j(z),
$$
where the functions $f_j$ are analytic. A function that is $q$-analytic for some $q$ is called \emph{polyanalytic}. We define the \emph{Bargmann-Fock space of $q$-analytic functions} as
\[
A^2_{n,q}:= \big\{ f(z): \int_\C |f(z)|^2 d\mu_n(z), \bar \partial^q f = 0 \big\}.
\]
The spaces of \emph{$q$-analytic polynomials} are defined by
$$
\Pol=\mathrm{span} \big\{\bar{z}^r z^j: 0 \leq j \leq n-1, 0 \leq r \leq q-1 \big\}
$$
We equip $\Pol$ with the inner product from  $A^2(e^{-n\lvert z\rvert^2})$. The \emph{Bargmann-Fock space of pure $q$-analytic functions} is defined as the orthogonal difference
\[
A^2_{\delta;n,q} := A^2_{n,q} \ominus A^2_{n,q-1}. 
\]
The spaces of {\em pure $q$-analytic polynomials} are defined analogously by
$$
\delta\Pol =\Pol\ominus\mathrm{Pol}_{n,q-1}.
$$
Any function space $H$ encountered here possesses a reproducing kernel, i.e. a function $K(z,w)$ such that for any $w\in\C$, $K_w:=K(\cdot,w)$ is an element of the space, and for any function $f\in H$ it holds that
$$
f(w)=\langle f,K_w\rangle,\qquad w\in\C.
$$
We denote by $K_{n,q}(z,w)$ the reproducing kernel for the space $\Pol$ and by $K_{\delta;n,q}(z,w)$ the kernel for the space $\delta\Pol$. The kernel $K_{n,1}(z,w)$ is simply written $K_n(z,w)$.
Clearly, $K_{n,q}$ can be expressed in terms of the pure $q$-analytic kernels as follows:
\begin{equation} \label{eq:kernel-decomp}
K_{n,q}(z,w)=\sum_{r=1}^{q} K_{\delta; n,r}(z,w).
\end{equation}
In \cite{haimi2013polyanalytic}, the following explicit expression for the kernel $K_{n,q}$ 
was given:
\begin{align}
\label{eq:kernelexpli}
K_{n,q} (z,w) = m\sum_{r=0}^{q-1} 
\sum_{i=0}^{n-r-1} \frac{r!}{(r+i)!} 
(nz\bar{w})^i L_r^{i}(n|z|^2)L_r^{i}(n|w|^2) \\
+ n\sum_{j=0}^{q-2}\sum_{k=1}^{q-j-1} \frac{j!}{(k+j)!} 
(n\overline{z}w)^k L^{k}_{j}(n|z|^2)L^{k}_{j}(n|w|^2). \nonumber
\end{align}
The kernel $K_{\delta; n,q}$ can then be obtained from this and the relation
\begin{equation} \label{eq:purekernelexpli}
K_{\delta; n,q}= K_{n,q}(z,w)- 
K_{n, q-1}(z,w).
\end{equation}
 We will not need to use the explicit expressions for these kernels in this paper.  We just observe that 
\[
K_n(z,w)=K_{n,1}(z,w) =
n\sum_{j=0}^{n-1}\frac{(nz \bar w)^j}{j!} \]
is the standard Ginibre kernel from random matrix theory. 
Rather than use \eqref{eq:kernelexpli} and \eqref{eq:purekernelexpli},  we will express 
$K_{n,q}$ and $K_{\delta; n,q}$ in terms of $K_n$ and the \emph{raising operators}
$$
(T_{n,r} f)(z)=\frac{n^{-r/2}}{\sqrt{r!}}e^{n\lvert z\rvert^2}\partial^r\left\{f(z)e^{-n\lvert z\rvert^2}\right\}.
$$
Frequently, it will be convenient to use the notation $T_n := -n^{1/2} T_{n,1}$.
As explained in \cite{haimi2013polyanalytic}, $T_{n,r}$ is an isometric isomorphism from $\mathrm{Pol}_{n,1}$ to $\delta\mathrm{Pol}_{n,r+1}$. 
Thus, $T_{n,r}$ maps orthonormal bases to orthonormal bases. It thus implies that the pure polyanalytic kernels may be obtained from the analytic kernel $K_n(z,w)$ by 
$$K_{\delta;n,r}(z,w)=[T_{n,r-1}]_z\overline{[T_{n,r-1}]}_w K_n(z,w).$$
 Consequently,
\begin{equation}\label{eq:split-kernel}
K_{n,q}(z,w)=\sum_{r=0}^{q-1}[T_{n,r}]_z\overline{[T_{n,r}]}_w K_n(z,w)= \sum_{r=0}^{q-1}\frac{n^{-r}}{r!}[T_n]^r_z \overline{[T_n]}^r_w K_n(z,w).
\end{equation}

Polyanalytic functions appear naturally in e.g. quantum mechanics and time--frequency analysis. The reader who is interested in more background can consult \cite{abreu2014function},  \cite{abreu2010sampling}, \cite{vasilevski2000poly} or \cite{balk1970polyanalytic}. 
\subsection{Polyanalytic Ginibre ensembles} 
Our aim is to study determinantal processes associated with the kernels $K_{n,q}$ and $K_{\delta; n,q}$. These are given by the following probability measures on $\C^{nq}$ and $\C^n$:
\[
d\mathbb{P}_{n,q} := \frac{1}{(nq)!} \det [K_{q,n}(z_j,z_k)]_{1 \leq j,k \leq nq} e^{-n|z_1|^2- \ldots - n|z_{nq}|^2}
dA(z_1) \ldots dA(z_{nq})
\]
and 
\[
d\mathbb{P}_{\delta; n,q} := \frac{1}{n!} \det [K_{\delta; q,n}(z_j,z_k)]_{1 \leq j,k \leq n} e^{-n|z_1|^2- \ldots - n|z_n|^2}
dA(z_1) \ldots dA(z_n).
\]
The fact that these are probability measures is a standard result in theory of determinantal point processes (see e.g. \cite{hough2009zeros} for an introduction). We will identify all copies of $\C$ and interpret $d\mathbb{P}_{n,q}$ and 
$d\mathbb{P}_{\delta; n,q}$ as densities for random configurations of $nq$ and $n$ unlabelled points in $\C$, 
respectively. 
It is well known (\cite{macchi1975coincidence}, \cite{soshnikov2000determinantal}) that any locally trace class projection kernel defines a determinantal process. Because our spaces are finite dimensional, we do not need this general definition here. We just note that an infinite dimensional counterpart of polyanalytic Ginibre ensembles has been studied by Shirai \cite{shirai2015ginibre} and it belongs to 
a more general class of point processes called \emph{Weyl-Heisenberg ensembles}, recently introduced in  
\cite{abreu2016weyl}.



\subsection{Linear statistics}
Recall from the introduction that given a bounded, continuous function $g$,  the linear statistics are defined by
\[
\mathrm{trace}_{n,q}g: = \sum_{j=1}^{nq} g(\lambda_j), 
\]
where $(\lambda_1, \ldots, \lambda_{nq})$ is picked from the measure $d \mathbb{P}_{n,q}$. The variables $\mathrm{trace}_{\delta; n,q}g$ are defined in a similar way, but one picks 
a random vector $(\lambda_1, \ldots, \lambda_{n})$ from $d \mathbb{P}_{\delta; n,q}$ instead. 

Asymptotic behaviour of the expectation of linear statistics can be analysed as follows: 
\[
\frac{1}{nq} \mathbb{E} \mathrm{trace}_{q,n}(f)= \mathbb{E} f(z_1) \\
= \frac{1}{nq} \int_{\mathbb{C}} f(z) K_{q,n}(z,z) e^{-n|z|^2} dA(z) \to
\int_{\mathbb{D}} f(z) dA.
\]
Here, the second equality is a general fact about determinantal point processes. 
The limit follows from the weak convergence $ \frac{1}{nq} K_{q,n}(z,z) e^{-n|z|^2} \to 1_{\overline{\mathbb{D}}}$
which follows from the results in \cite{haimi2013polyanalytic}. A similar statement is true for pure polyanalytic kernels $K_{\delta; n,q}$. This generalizes the well-known circular law
about the Ginibre ensemble (the case $q=1$). The interpretation is that the particles tend to accumulate uniformly on the unit disk. We use standard terminology and refer to the open unit disk as the \emph{bulk}. 

The goal of this paper is to understand how the linear statistics fluctuate around the mean. We let 
$\mathrm{fluct}_{n,q}$ and $\mathrm{fluct}_{\delta; n,q}$ be the mean-zero variables 
\[
\fluct_{n,q}g= \sum_{\lambda_j} g(\lambda_j)- \mathbb{E}g(\lambda_j).
\]
where $\lambda_1, \ldots, \lambda_{nq}$ is picked from the density $d \mathbb{P}_{n,q}$. 
The variables $\fluct_{\delta; n,q}g$ have an analogous definition. 

\section{Cumulants} \label{sec:cumu}
For any real-valued random variable $X$, the cumulants $C_k(X)$ are defined implicitly by
$$
\log \mathbb{E}\left[e^{tX}\right]=\sum_{j=1}^\infty \frac{t^k}{k!} C_k(X).
$$
The first cumulant is equal to the expectation and the second to the variance of $X$. For linear statistics of determinantal processes, the cumulants are given by an explicit formula  
introduced by Costin and Lebowitz in \cite{costin1995gaussian}.  We will use a formulation from \cite{ameur2011fluctuations}. For $X=\trace_{n,q}(g)$, we have
\begin{equation} \label{eq:cumulantformula}
C_k(X)=\int_{\C^k}G_k(z_1,\ldots, z_k) K_{n,q}(z_1,z_2) K_{n,q}(z_2,z_3)\cdots K_{n,q}(z_k,z_1) d\mu_n^k(z_1,\ldots z_k),
\end{equation}
where
$$
G_k(z_1,\ldots z_k)=\sum_{j=1}^k\frac{(-1)^{j-1}}{j}\sum_{\substack{k_1+k_2+\ldots k_j=k \\ k_1,\ldots k_j\geq 1}}\frac{k!}{k_1!k_2!\cdots k_j!}\prod_{l=1}^jg(z_l)^{k_l}.
$$
For $X=\trace_{\delta;n,q}(g)$, one just replaces each occurrence of $K_{n,q}$ by $K_{\delta;n,q}$ in \eqref{eq:cumulantformula}. For future reference, note that $G_k$ is a sum of products of functions of one variable. 

It is known that convergence of the cumulants implies convergence in distribution. Moreover, a random variable is normally distributed iff all cumulants of order $k \geq 3$ vanish. Therefore, to prove the main theorem, we need to show that for $k \geq 3$, $C_k(\trace_{n,q})$ and $C_k(\trace_{\delta; n,q})$ tend to zero and compute the limits of the second cumulants as $n \to \infty$. 

The following proposition gives an alternative way to represent  the cumulants. In this new representation, the cumulants are written 
as integrals involving a cyclic product with the analytic Ginibre kernel $K_n$. 

For non-negative integers $\alpha, \beta$ and $n$, we define the differential operators
$$
\mathcal{D}_{\alpha, \beta, n} =\sum_{j=0}^{\min(\alpha, \beta)} {\min(\alpha, \beta) \choose j} (\max(\alpha, \beta))_j n^j \bar \partial^{\alpha-j} \partial^{\beta-j},
$$
where $(r)_j= r (r-1) \cdots (r-j+1)$. Assuming $\beta \geq \alpha$, this can be written in a slightly more compact form as 
$$
\mathcal{D}_{\alpha, \beta, n} = \alpha! n^{\alpha} \partial^{\beta-\alpha} L^{\beta- \alpha}_{\alpha}(-n^{-1} \Delta),
$$
where 
$$
L^k_r:=\sum_{j=0}^r (-1)^j {r+k \choose r-j} \frac{1}{j!} x^j
$$ 
denotes the associated Laguerre polynomial of index $k$ and degree $r$. Similarly, when $\alpha \geq \beta$, we have
$$
\mathcal{D}_{\alpha, \beta, n} = \beta! n^{\beta} \bar \partial^{\alpha-\beta} L^{\alpha-\beta}_{\beta}(-n^{-1} \Delta),
$$
In particular, in the special case $\alpha= \beta=r$ 
we have
\begin{equation}\label{eq:laguerrediffop}
\mathcal{D}_{\alpha, \beta, n}= n^r r! L^0_r(-n^{-1} \Delta).
\end{equation}
\begin{prop} \label{prop:crossterms}
Let $F:\mathbb{C}^k \to \mathbb{C}$ be bounded and smooth. We have
\begin{align*}
\int_{\mathbb{C)}^k} F(z_1, \dots, z_k) [T_n]_{z_1}^{i_1} \overline{[T_n]_{z_2}^{i_1}}K(z_1, z_2)
\cdots [T_n]_{z_k}^{i_k} \overline{[T_n]_{z_1}^{i_k}}K(z_k, z_1) d \mu^k(z_1, \ldots, z_k) \\
= \int_{\C^k} \mathcal{D}_{i_1, \ldots, i_k, n} F(z_1, \ldots, z_k) K_n(z_1, z_2) \cdots
K_n(z_k, z_1) d \mu^k(z_1, \ldots, z_k). 
\end{align*}
where 
$$ \mathcal{D}_{i_1, \ldots, i_k, n}= [\mathcal{D}_{i_k, i_1,n} \big]_{z_1} 
 [ \mathcal{D}_{i_1, i_2,n} ]_{z_2} \cdots [ \mathcal{D}_{i_{k-1}, i_k,n} ]_{z_k}.
$$
\end{prop}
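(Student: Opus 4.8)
The plan is to prove the identity one factor at a time by moving the raising operators off the Ginibre kernels and onto the test function $F$ via integration by parts, keeping careful track of what operator lands on $F$. The starting point is the defining relation
$$
(T_{n,r}f)(z)=\frac{n^{-r/2}}{\sqrt{r!}}\,e^{n|z|^2}\partial^r\big\{f(z)e^{-n|z|^2}\big\},
$$
which in the normalization $T_n=-n^{1/2}T_{n,1}$ means that applying $[T_n]^i_{z_j}$ to a kernel factor and then integrating against $d\mu_n(z_j)=e^{-n|z_j|^2}dA(z_j)$ is, up to constants, the same as differentiating $\partial_{z_j}^i$ of the product of the surrounding factors times the weight. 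The key observation is that each variable $z_j$ appears in exactly two kernel factors in the cyclic product — once holomorphically (as the first slot of $K(z_j,z_{j+1})$, carrying $[T_n]^{i_j}_{z_j}$) and once antiholomorphically (as the second slot of $K(z_{j-1},z_j)$, carrying $\overline{[T_n]^{i_{j-1}}_{z_j}}$). So the integration by parts in the variable $z_j$ will combine a $\partial^{i_j}$ from one factor with a $\bar\partial^{i_{j-1}}$ from the other, and after sweeping the weight through, the net effect on $F$ is precisely an operator of the form $\mathcal{D}_{i_{j-1},i_j,n}$ acting in $z_j$. This is exactly the structure of the claimed $\mathcal{D}_{i_1,\ldots,i_k,n}$.

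Concretely, I would first record the adjoint identity: for smooth $g,h$ with suitable decay,
$$
\int_{\C} \big([T_n]^a_z g(z)\big)\, \overline{h(z)}\, d\mu_n(z) = \int_\C g(z)\, \overline{[T_n]^a_z h(z)}\, d\mu_n(z),
$$
which is just the statement that $T_{n,r}$ (hence $T_n$) acts isometrically, rephrased as self-adjointness of the relevant pairing; more useful is the "raw" version in which $[T_n]^a_z$ applied under $\int(\cdot)\,e^{-n|z|^2}dA$ is traded for $(-1)^a n^{a/2}$ times $\bar\partial^a$ hitting everything else times $e^{-n|z|^2}$ — essentially integration by parts $\partial \leftrightarrow -\bar\partial$ after absorbing the Gaussian. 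Applying this in each $z_j$ moves $[T_n]^{i_j}_{z_j}$ onto the product $F\cdot\prod(\text{other kernels})\cdot e^{-n|z_j|^2}$; the Ginibre kernel $K_n(z_{j-1},z_j)=n\sum (n z_{j-1}\bar z_j)^\ell/\ell!$ is antiholomorphic in $z_j$, so when the transplanted operator $\bar\partial_{z_j}$-type derivatives meet it together with $e^{-n|z_j|^2}$, Leibniz's rule produces exactly the Laguerre-polynomial-in-$n^{-1}\Delta$ combination recorded in the definition of $\mathcal{D}_{\alpha,\beta,n}$. The cross-terms $(\max(\alpha,\beta))_j n^j$ with the binomial coefficient are precisely the combinatorial coefficients arising from distributing $\bar\partial_{z_j}^{i_{j-1}}$ over $(z_{j-1}\bar z_j)^{i_j}$-type monomials against the Gaussian — this is where the identity \eqref{eq:laguerrediffop} and the two one-sided formulas for $\mathcal{D}_{\alpha,\beta,n}$ get used to package the answer. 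I would do one variable in full detail, observe that the operations in distinct variables commute (they act on disjoint sets of factors once the kernels are viewed as sums of separated monomials — recall the remark that $G_k$, and here $F$, may be taken as sums of products of one-variable functions, though for the operator identity smoothness and boundedness suffice), and then iterate.

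The main obstacle I anticipate is bookkeeping rather than conceptual: making sure the two derivatives landing in variable $z_j$ — the $\partial^{i_j}$ originating from the $[T_n]^{i_j}_{z_j}$ on $K(z_j,z_{j+1})$ and the $\bar\partial^{i_{j-1}}$ from $\overline{[T_n]^{i_{j-1}}_{z_j}}$ on $K(z_{j-1},z_j)$ — recombine into $\mathcal{D}_{i_{j-1},i_j,n}$ with the correct roles of $\alpha=i_{j-1}$ and $\beta=i_j$, including the asymmetry $\partial$ vs. $\bar\partial$ depending on which of $i_{j-1},i_j$ is larger; one must verify this against the stated formula $\mathcal{D}_{i_1,\ldots,i_k,n}=[\mathcal{D}_{i_k,i_1,n}]_{z_1}\cdots[\mathcal{D}_{i_{k-1},i_k,n}]_{z_k}$. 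A secondary point is justifying the integrations by parts: since $F\in C^\infty$ and bounded and the Ginibre kernels times the Gaussian weight decay rapidly, all boundary terms at infinity vanish, so no analytic subtleties arise — but I would state this cleanly. Once the single-variable computation is pinned down, the full identity follows by applying it successively in $z_1,\ldots,z_k$ and collecting the operators, which commute because each acts only through the pair of kernel factors attached to its own variable.
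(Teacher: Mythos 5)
Your plan is essentially the paper's proof: integrate by parts one variable at a time via $\int f\,T_n g\,d\mu_n=\int \partial f\cdot g\,d\mu_n$, use the commutation relation $\bar\partial^{\,j} T_n^r f=(r)_j n^j T_n^{r-j}f$ for analytic $f$ to generate the falling-factorial coefficients of $\mathcal{D}_{\alpha,\beta,n}$ when the transplanted derivatives meet the conjugated raising operators, and then iterate over $z_1,\dots,z_k$. The only slip is the parenthetical ``adjoint identity'' $\int (T_n^a g)\overline{h}\,d\mu_n=\int g\,\overline{T_n^a h}\,d\mu_n$, which would assert self-adjointness of $T_n$ rather than isometry (its adjoint in $L^2(d\mu_n)$ is $\bar\partial$, and the correct raw integration-by-parts trades $T_n^a$ for $\partial^a$ on everything else with no extra constant); since you immediately discard it in favour of the raw version and flag the remaining $\partial$-versus-$\bar\partial$ bookkeeping yourself, this does not affect the argument.
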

\begin{proof}
We will use that 
\begin{equation} \label{eq:partint}
\int f(z) T_n g(z) d \mu_n(z)= \int \partial f(z) g(z) d \mu_n(z)
\end{equation}
where $f, g \in C^1$. In our applications, $f$ and $g$ will have at most polynomial growth in $z$, so boundary terms do not appear in the partial integration.  We will also need the following: for $j \leq r$, 
\begin{equation} \label{eq:cancel}
\bar \partial^j T_n^r f =(r)_j n^j T_n^{r-j}f
\end{equation}
for an analytic function $f$.  This can be seen as follows: 
\begin{align*} 
\bar \partial^j T_{n}^r f &= (-1)^r \bar \partial^j \sum_{k=0}^r {r \choose k} f^{(k)}(z) (-n \bar z)^{r-k} \\
&=(-1)^{r+j} n^j \sum_{k=0}^{r-j} {r \choose k} (r-k)_j(-n)^{r-k-j} \bar{z}^{r-j-k}f^{(k)}(z) \\
=&(-1)^{r+j} n^j(r)_j \sum_{k=0}^{r-j} {r-j \choose k} (-n \bar z)^{r-j-k} f^{(k)}(z)=n^j (r)_j T_{n}^{r-j} f.
\end{align*}

We integrate by parts in the variable $z_1$.  Suppose that $i_1 \leq i_k$. Using \eqref{eq:partint} and \eqref{eq:cancel}, we obtain
\begin{align*}
& \int_{\C^k} F(z_1, \ldots, z_k) [T_n]_{z_1}^{i_1} \overline{[T_n]_{z_2}^{i_1}}K_n(z_1, z_2)
\cdots [T_n]_{z_k}^{i_k} \overline{[T_n]_{z_1}^{i_k}}K_n(z_k, z_1) d \mu^k \\
&=\sum_{j=0}^{i_1} {i_1 \choose j} n^j (i_k)_j  \partial^{i_1-j}_{z_1} F(z_1, \ldots, z_k)  \overline{[T_n]_{z_2}^{i_1}}K(z_1, z_2)
\cdots [T_n]_{z_k}^{i_k} \overline{[T_n]_{z_1}^{i_k-j}}K_n(z_k, z_1) d \mu^k \\
&= \sum_{j=0}^{i_1} {i_1 \choose j} n^j (i_k)_j \bar \partial^{i_k-j}_{z_1} \partial^{i_1-j}_{z_1}  F(z_1, \ldots, z_k)  \overline{[T_n]_{z_2}^{i_1}}K_n(z_1, z_2)
\cdots [T_n]_{z_k}^{i_k} K_n(z_k, z_1) d \mu^k.
\end{align*}
The last equality holds because $\overline{[T_n]_{z_2}^{i_i} }K_n(z_1,z_2)$ is analytic in $z_1$. If $i_1 \geq i_k$ instead, the  operator acting on $F$ 
is $ \sum_{j=0}^{i_k} {i_k \choose j} n^j (i_1)_j \bar \partial^{i_k-j}_{z_1} \partial^{i_1-j}_{z_1}$. In any case, the operator acting on $F$ is
$[\mathcal{D}_{i_k, i_1,n}]_{z_1}$. 

The claim now follows by performing the same procedure in the other variables. 
\end{proof}
Taking $F= G_k$, it follows directly from this proposition and \eqref{eq:split-kernel} that
\begin{align} \label{eq:fullcumu}
&C_k(\trace_{n,q}g) \\
&= \sum_{0 \leq i_1, \ldots, i_k \leq q-1} \frac{n^{-i_1-\ldots -i_k}}{i_1! \cdots i_k!}
\int_{\C^k} \mathcal{D}_{i_1, \ldots, i_k, n} G_k(z_1, \ldots, z_k) K_n(z_1, z_2) \cdots
K_n(z_k, z_1) d \mu^k(z_1, \ldots, z_k). 
\end{align}

In the pure polyanalytic case, we have the following, slightly simpler version. Using
\eqref{eq:laguerrediffop}, we obtain
\begin{align} \label{eq:purecumu}
&C_k(\trace_{\delta; n,q}g)\\ 
&= \int_{\C^k} \bigg( L_{q-1}(-n^{-1}\Delta_{z_1})\cdots L_{q-1}(-n^{-1}\Delta_{z_k}) G_k(z_1,\ldots,z_k) \bigg) K_n(z_1,z_2)\cdots K_n(z_k,z_1)d\mu^k_n(z_1, \ldots, z_k).
\nonumber
\end{align}

\section{Estimation of the cumulants} \label{sec:cumuest}
In the previous section, we saw that the cumulants $C_k(\trace_{q,n}g)$ and
$C_k(\trace_{\delta;q,n})$  can both be written as a finite series 
\begin{equation} \label{eq:cumulantseries}
\sum_j n^{-j} \int_{\C^k} H_j(z_1, \ldots, z_k) K_n(z_1, z_2)K_n(z_2,z_3) \cdots K_n(z_k, z_1)
d \mu_n^k(z_1, \ldots, z_k).
\end{equation}
where he functions $H_j$ are of the form
$$
H_j(z_1, \ldots, z_k)= \sum_{\alpha} \prod_{m=1}^k f_{j, \alpha, m}(z_m)
$$ 
Here, the sum over $\alpha$ is finite and the functions $f_{j, \alpha, m}$ are bounded and 
continuous and each of them is either compactly supported or identically equal to $1$. In addition, we know that for each $\alpha$ and $j$, the former is the case for at least one $f_{j, \alpha,m}$. 

In this section, we will show that for every $j \geq 2$, the corresponding 
term in the sum \eqref{eq:cumulantseries} tends to zero as $n \to \infty$. Furthermore, we will show that this is the case
also for the $j=1$ term if $H_j$ vanishes on the diagonal, i.e. if $H_j(z, \ldots, z)=0$. These
results are then applied in Section \ref{sec:proof} to prove Theorem \ref{maintheorem}. 

We will start by applying well-known estimates of the kernel Ginibre kernel $K_n$. In fact, essentially the same estimates are also valid for kernels which are defined with more general than Gaussian weights. 
We have the following off-diagonal decay estimate for $K_n$ (Theorem 8.1 in \cite{ameur2010berezin}):
\begin{equation}\label{eq:off-diag}
\lvert K_n(z,w)\rvert^2 e^{-n\left(\lvert z\rvert^2+\lvert w\rvert^2\right)}\leq Cn^2 e^{-c\sqrt{n}\min\{\lvert z-w\rvert, d(z,\partial\D)\}}, \quad z\in \D,w\in\C
\end{equation}
where the constants $C$ and $c$ are absolute constants, in particular independent of $n$. 
We will also need (Proposition 3.6 and p. 1541 in \cite{ameur2010berezin})
\begin{equation} \label{eq:standest}
\lvert K_n(z,w)\rvert^2e^{-n\lvert z\rvert^2-n\lvert w\rvert^2}\leq Cn^2e^{-n(Q(z)-\hat{Q}(z))-n(Q(w)-\hat{Q}(w))},
\end{equation}
where $Q(z)= |z|^2$ and 
$$
\hat{Q}(z)= 
\begin{cases}
|z|^2,  & |z| \leq 1 \\
\log |z|^2+1, & |z| \geq 1.
\end{cases} 
$$
According to this estimate, $\lvert K_n(z,w)\rvert^2e^{-n\lvert z\rvert^2-n\lvert w\rvert^2}$ decays quickly to zero as $n \to \infty$ if either $z$ or $w$ is outside the unit disk.

Let 
$$
\epsilon_n=\epsilon_{n,k}=M_k\frac{\log n}{\sqrt{n}},
$$ 
where $M_k$ is some large constant to be specified later.
We split $\C^k$ into three different regions $\Lambda_n, \Gamma_n$ and $\Omega_n$ as follows:
\begin{align*}
\Lambda_n &=\{z_1\in (1-\epsilon_n)\D, \max_{2\leq j\leq k}\lvert z_j-z_1\rvert\leq \epsilon_n/2\}, \\
\Gamma_n &=\{d(z_1, \partial\D)\leq \epsilon_n, (z_2,\ldots,z_k)\in\C^{k-1}\}, \\
\Omega_n &=\Omega_n^1\cup\Omega_n^2=\{z_1\in(1-\epsilon_n)\D, \lvert z_j-z_1\rvert\geq \epsilon_n/2 \;\text{for some}\; j\}\cup\{z_1\in \C\setminus\D, d(z_1,\partial\D)\geq\epsilon_n\}.
\end{align*}
The following estimate on the domain $\Omega_n$ is a straigthforward application of 
the kernel estimates \eqref{eq:standest} and \eqref{eq:off-diag}. We will write
$$
R_{k,n}(z_1, \ldots, z_k):= K_n(z_1, z_2) \cdots K_n(z_k, z_1).
$$
\begin{prop}
For any $F\in L^{\infty}(\C^k)$, we have that
$$
\int_{\Omega_n}F(z_1,\ldots,z_k)R_{k,n}(z_1,\ldots,z_k)d\mu_n^k(z_1,\ldots,z_k)=O(n^{-1}),
$$
whenever $M_k\geq 2(k+1)/c$, where $c$ is the constant in \eqref{eq:off-diag}.
\end{prop}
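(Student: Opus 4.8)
The plan is to bound the integral on each of the two pieces $\Omega_n^1$ and $\Omega_n^2$ separately, using the two kernel estimates \eqref{eq:off-diag} and \eqref{eq:standest}. Since $F$ is bounded, it suffices to estimate $\int_{\Omega_n^i}|R_{k,n}|\,d\mu_n^k$. On $\Omega_n^2$, where $z_1\in\C\setminus\D$ with $d(z_1,\partial\D)\ge\epsilon_n$, I would first relabel: in the cyclic product $R_{k,n}=K_n(z_1,z_2)\cdots K_n(z_k,z_1)$ every factor appears with a $d\mu_n$ on each of its two arguments, so by splitting the weight $e^{-n|z_j|^2}=\prod$ symmetrically across the two neighbouring kernels, I can write $|R_{k,n}|e^{-n\sum|z_j|^2}$ as a product of terms $|K_n(z_j,z_{j+1})|e^{-\frac{n}{2}(|z_j|^2+|z_{j+1}|^2)}$, and then pair the factor touching $z_1$ with \eqref{eq:standest} while bounding the remaining factors crudely. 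Actually cleaner: apply \eqref{eq:standest} to one factor containing $z_1$ to gain the Gaussian decay $e^{-n(Q(z_1)-\hat Q(z_1))}$, which on $\Omega_n^2$ is at least $e^{-cn\epsilon_n^2}$-type small — in fact $Q(z_1)-\hat Q(z_1)=|z_1|^2-\log|z_1|^2-1\gtrsim (|z_1|-1)^2\ge\epsilon_n^2$ near the circle and grows at infinity — and bound all other kernel factors by the trivial $L^2$-type bound, using that $\int|K_n(z,w)|^2e^{-n|z|^2-n|w|^2}d\mu_n(w)=K_n(z,z)e^{-n|z|^2}=O(n)$ to integrate out $z_2,\ldots,z_k$ one at a time via Cauchy–Schwarz along the cycle. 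The surviving factor $e^{-n\epsilon_n^2}=n^{-M_k^2}$ then beats all powers of $n$, so this region is $O(n^{-1})$ with room to spare.

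On $\Omega_n^1$, where $z_1\in(1-\epsilon_n)\D$ but $|z_j-z_1|\ge\epsilon_n/2$ for some $j$, the idea is to use the off-diagonal decay \eqref{eq:off-diag}: for that particular index $j$, the factor $K_n(z_{j-1},z_j)$ or $K_n(z_j,z_{j+1})$ adjacent to $z_j$ carries decay $e^{-c\sqrt n\min\{|z_j-z_{j\pm1}|,d(z_j,\partial\D)\}}$. Here one must be a little careful, since the decay in \eqref{eq:off-diag} is governed by $|z_j-z_{j\pm1}|$ (a neighbour in the cycle) rather than by $|z_j-z_1|$; but by the triangle inequality, if $|z_j-z_1|\ge\epsilon_n/2$ then along the cyclic chain from $z_1$ to $z_j$ at least one consecutive pair $z_m,z_{m+1}$ has $|z_m-z_{m+1}|\ge\epsilon_n/(2k)$, so the corresponding kernel factor gains $e^{-c\sqrt n\,\epsilon_n/(2k)}=n^{-cM_k/(2k)}$ (using that $d(z_m,\partial\D)$ can be handled in the same breath, since either $z_m\in(1-\epsilon_n)\D$ is irrelevant or the alternative distance term is itself comparable). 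I would then integrate out the remaining $k-1$ variables by the same Cauchy–Schwarz-along-the-cycle argument, each contributing a bounded power of $n$, say $n^{c_k}$ for some explicit $c_k$ depending only on $k$, so that the product is $O(n^{c_k-cM_k/(2k)})$; choosing $M_k\ge 2(k+1)/c$ (and using $k$-dependent slack) makes this $O(n^{-1})$.

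The main obstacle I anticipate is the bookkeeping in the Cauchy–Schwarz reduction: one must peel off the variables $z_2,\dots,z_k$ from the cyclic product one at a time while keeping track of which kernel factor has already been spent on the decay estimate, ensuring that what remains is still a ``chain'' of kernels to which the $\int|K_n(z,w)|^2e^{-n(|z|^2+|w|^2)}d\mu_n=O(n)$ bound applies iteratively, and that the boundedness of $F$ together with the compact-support or $\equiv1$ structure causes no problem (it does not, since $F\in L^\infty$ suffices here). A secondary technical point is converting the $\min\{|z_j-z_{j\pm1}|,d(z_j,\partial\D)\}$ in \eqref{eq:off-diag} into the clean lower bound $\gtrsim\epsilon_n/(2k)$ on $\Omega_n^1$: one checks that if $z_1\in(1-\epsilon_n)\D$ and $d(z_j,\partial\D)$ is small then $z_j$ is near the circle, forcing $|z_j-z_{j\pm1}|$ to absorb the gap, which is routine but needs to be stated carefully. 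Given the exponential-in-$\sqrt n$ gains against only polynomial losses, the constant $M_k\ge2(k+1)/c$ is comfortably sufficient.
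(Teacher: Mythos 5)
Your proposal is correct and takes essentially the same route as the paper: bound $\Omega_n^2$ via \eqref{eq:standest} and the super-polynomially small Gaussian tail $e^{-n\epsilon_n^2}$ outside the disk, and bound $\Omega_n^1$ by the pigeonhole observation that some consecutive pair satisfies $\lvert z_m-z_{m+1}\rvert\geq \epsilon_n/(2k)$ together with the off-diagonal decay \eqref{eq:off-diag}, with the remaining kernel factors absorbed by crude polynomial bounds. The one point where your sketch is imprecise --- converting the $\min\{\lvert z_m-z_{m+1}\rvert,\,d(z_m,\partial\D)\}$ into a usable lower bound --- is resolved in the paper by taking the \emph{first} index $m$ with a large gap, so that $\lvert z_m-z_1\rvert\leq\epsilon_n/2$ and hence $d(z_m,\partial\D)\geq\epsilon_n/2$; this is indeed the routine fix you anticipated.
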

\begin{proof}
Defining $f(x)= x - \log x -1$, we can use \eqref{eq:standest} to write
\begin{align*}
\bigg| \int_{\Omega_n^2}F(z_1,\ldots,z_k)R_{k,n}(z_1,\ldots,z_k)d\mu_n^k(z_1,\ldots,z_k)\bigg|  \\
\leq C^{k/2}n^k \Vert F \Vert_{\infty} \int_{|z|^2 \geq (1+\epsilon_n)^2} e^{-n f(|z|^2)} dA(z) 
\bigg[ \int_{\C} e^{-n(Q(z)- \hat{Q}(z))} dA(z) \bigg]^{k-1}
\end{align*}
For $x\geq (1+\epsilon_n)^2$, we estimate by convexity 
$$
f(x) \geq f((1+\epsilon_n)^2)+f^{\prime}((1+\epsilon_n)^2)(x- (1+ \epsilon_n)^2) \geq \epsilon_n^2 + \frac{2 \epsilon_n + \epsilon_n^2}{(1+\epsilon_n)^2}[x-(1+\epsilon_n)^2],
$$
so 
$$
\int_{|z|^2 \geq (1+\epsilon_n)^2} e^{-n f(|z|^2)} dA(z) \leq 
\mathrm{O}(e^{-n\epsilon_n^2})= \mathrm{O}(n^{-M_k^2 \log n}).
$$
Because 
$$
\int_{\C} e^{-n(Q(z)- \hat{Q}(z))} dA(z)= \mathrm{O}(1),
$$
we obtain
$$
\int_{\Omega_n^1}F(z_1,\ldots,z_k)R_{k,n}(z_1,\ldots,z_k)d\mu_n^k(z_1,\ldots,z_k) 
= \mathrm{O}(n^{-\kappa})
$$
 for any desired $\kappa >0$. 
 
Proceeding to the bulk term, we note that at each point $(z_1,\ldots,z_k)\in\Omega_n^1$, there is some index $j$ for which $\lvert z_j-z_{j+1} \rvert\geq \frac{\epsilon_n}{2k}$. 
Choosing $j_0 \geq 1$ to be the smallest such index, we observe that $d(z_{j_0}, \partial \mathbb{D}) \geq \epsilon_n/2$. Now we use the off-diagonal estimate \eqref{eq:off-diag}, and obtain
$$
\lvert K_n(z_{j_0},z_{j_0+1})\rvert e^{-n(\lvert z_{j_0}\rvert^2+\lvert z_{j_0+1}\rvert^2)/2} \leq Cn e^{-\frac{1}{4k}cM_k\log n}=Cn^{1-\frac{1}{4k}cM_k}.
$$
We thus get
\begin{equation} \label{eq:est1}
\lvert R_{k,n}(z_1,\ldots,z_k)e^{-n(\lvert z_1\rvert^2+\ldots +\lvert z_k\rvert^2)} \rvert \leq Cn^{k-\frac{1}{4k}cM_k}.
\end{equation} 
Because 
$$\int_{\C \setminus 2\mathbb{D}}e^{-n(\lvert z \rvert^2- \log \lvert z \rvert^2 -1)} = \mathrm{O}(e^{-an})
$$
for some $a > 0$, we see by using \eqref{eq:standest} that 
\begin{align*}
&\int_{\Omega_n^1}F(z_1,\ldots,z_k)R_{k,n}(z_1,\ldots,z_k)d\mu_n^k(z_1,\ldots,z_k) \\
&= \int_{\Omega_n^1 \cap (2 \mathbb{D})^k}F(z_1,\ldots,z_k)R_{k,n}(z_1,\ldots,z_k)d\mu_n^k(z_1,\ldots,z_k) + \mathrm{O}(e^{-bn})
\end{align*}
for some $b>0$. Applying this with \eqref{eq:est1} gives
$$
\int_{\Omega_n^1}F(z_1,\ldots,z_k)R_{k,n}(z_1,\ldots,z_k)d\mu_n^k(z_1,\ldots,z_k)=O\left(n^{k-\frac{cM_k}{4k}}\right)=O(n^{-1}),
$$
where the last equality follows by the choice $M_k\geq 4k(k+1)/c$.
\end{proof}

The terms $j \geq 2$ in \eqref{eq:cumulantseries} are negligible because of the following proposition. 
\begin{prop}\label{prop:projection-est}
Assume that $f_1\in L^1(\C)$, and $f_2,\ldots f_k \in L^{\infty}(\C)$. Then
$$
\left\lvert \int_{\C^k} \prod_{j=1}^n f_j(z_j)R_{k,n}(z_1,\ldots ,z_k)d\mu_n^k(z_1,\ldots,z_k)\right\rvert \leq n \lVert f_1\rVert_{L^1(\C)}\prod_{j=2}^k\lVert f_j\rVert_\infty.
$$
\end{prop}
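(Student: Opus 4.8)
The plan is to recognise the integral as a quadratic form built from an orthogonal projection with reproducing kernel a Gaussian-weighted version of $K_n$, and to read the bound off from the fact that this projection has operator norm $1$ while its diagonal is bounded by $n$. Set $\widetilde K_n(z,w):=K_n(z,w)\,e^{-n(|z|^2+|w|^2)/2}$. Since $f\mapsto f\,e^{-n|z|^2/2}$ is a unitary of $L^2(e^{-n|z|^2}dA)$ onto $L^2(\C,dA)$, the function $\widetilde K_n$ is the reproducing kernel of the $n$-dimensional subspace $V_n:=\{f\,e^{-n|z|^2/2}:f\in\mathrm{Pol}_{n,1}\}\subset L^2(\C,dA)$; equivalently the integral operator $P$ on $L^2(\C,dA)$ with kernel $\widetilde K_n$ is the orthogonal projection onto $V_n$, so $P=P^*=P^2$ and $\|P\|=1$. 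From $K_n(z,z)=n\sum_{j=0}^{n-1}(n|z|^2)^j/j!$ we get the elementary but decisive bound $\widetilde K_n(z,z)=K_n(z,z)e^{-n|z|^2}\le n$ for all $z$. Finally, redistributing the Gaussian weights around the cycle (each variable $z_j$ receives a factor $e^{-n|z_j|^2/2}$ from each of the two kernels in which it appears) gives
$$
R_{k,n}(z_1,\dots,z_k)\,d\mu_n^k(z_1,\dots,z_k)=\widetilde K_n(z_1,z_2)\widetilde K_n(z_2,z_3)\cdots\widetilde K_n(z_k,z_1)\,dA(z_1)\cdots dA(z_k).
$$

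Next I would integrate out the variables one at a time, keeping $z_1$ to the end, and write the integral as $\int_\C f_1(z_1)\Psi(z_1)\,dA(z_1)$, where $\Psi(z_1)$ is the integral over $(z_2,\dots,z_k)\in\C^{k-1}$ of $\prod_{j=2}^k f_j(z_j)\cdot\widetilde K_n(z_1,z_2)\widetilde K_n(z_2,z_3)\cdots\widetilde K_n(z_k,z_1)$. The key identity is $\int_\C\widetilde K_n(z,w)\phi(w)\,dA(w)=(P\phi)(z)$ for every $\phi\in L^2$. With $\psi_{z_1}:=\widetilde K_n(\cdot,z_1)\in V_n$ and $M_{f_j}$ the operator of multiplication by $f_j$, peeling off $z_k$ turns the innermost integral into $PM_{f_k}\psi_{z_1}$, peeling off $z_{k-1}$ produces $PM_{f_{k-1}}PM_{f_k}\psi_{z_1}$, and so on, so that
$$
\Psi(z_1)=\big(PM_{f_2}PM_{f_3}\cdots PM_{f_k}\,\psi_{z_1}\big)(z_1).
$$
Since the right-hand side is the value at $z_1$ of a function lying in $V_n=\operatorname{ran}P$, the reproducing property yields $\Psi(z_1)=\big\langle PM_{f_2}PM_{f_3}\cdots PM_{f_k}\psi_{z_1},\,\psi_{z_1}\big\rangle_{L^2(\C,dA)}$. (Equivalently, the whole integral equals $\tr\big(M_{f_1}PM_{f_2}P\cdots M_{f_k}P\big)$, and one applies cyclicity and Hölder for Schatten norms.)

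The conclusion is then immediate. By Cauchy--Schwarz and submultiplicativity of the operator norm,
$$
|\Psi(z_1)|\le\big\|PM_{f_2}PM_{f_3}\cdots PM_{f_k}\big\|\;\|\psi_{z_1}\|^2\le\Big(\prod_{j=2}^k\|f_j\|_\infty\Big)\widetilde K_n(z_1,z_1)\le n\prod_{j=2}^k\|f_j\|_\infty,
$$
using $\|P\|=1$, $\|M_{f_j}\|=\|f_j\|_\infty$, $\|\psi_{z_1}\|^2=\langle\psi_{z_1},\psi_{z_1}\rangle=\widetilde K_n(z_1,z_1)$, and the diagonal bound. Integrating against $f_1$ gives $\big|\int_{\C^k}\prod_j f_j\,R_{k,n}\,d\mu_n^k\big|\le\|f_1\|_{L^1(\C)}\sup_{z_1}|\Psi(z_1)|\le n\|f_1\|_{L^1(\C)}\prod_{j=2}^k\|f_j\|_\infty$.

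The only genuine pitfall, which the argument is designed to avoid, is the temptation to pull the $L^\infty$ norms out of the integral before integrating the kernels: taking absolute values destroys the idempotence $\int\widetilde K_n(z,w)\widetilde K_n(w,u)\,dA(w)=\widetilde K_n(z,u)$ and, via iterated Cauchy--Schwarz on the individual factors, yields only the far weaker bound of order $n^{k/2}$. Keeping the operator-composition structure so that each $\|f_j\|_\infty$ enters through $\|M_{f_j}\|$ while every projection contributes the factor $1$ is precisely what makes the estimate uniform in $k$. The remaining points are routine: the identification of $P$ with the integral operator, and the absolute convergence of all the iterated integrals (clear, since each $\widetilde K_n(z,\cdot)$ is a polynomial times a Gaussian and the $f_j$ are bounded with $f_1\in L^1$), which justifies Fubini.
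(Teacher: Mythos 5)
Your proof is correct and follows essentially the same route as the paper: both rewrite the integral as $\int f_1(z_1)\Psi(z_1)\,dA(z_1)$ with $\Psi$ an alternating composition of the Bergman projection and multiplication operators applied to the reproducing kernel at $z_1$, then combine the reproducing property, Cauchy--Schwarz, the norm-one property of the projection, and the diagonal bound $K_n(z,z)e^{-n|z|^2}\le n$. The only cosmetic difference is that you work in the unitarily equivalent weighted picture on $L^2(\C,dA)$ and phrase the final estimate via operator norms $\lVert M_{f_j}\rVert=\lVert f_j\rVert_\infty$, whereas the paper stays in $L^2(d\mu_n)$.
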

\begin{proof}
We write the integral on the left hand side as
$$
\int_{\C^k}\prod_{j=1}^n f_j(z_j)R_{k,n}(z_1,\ldots ,z_k)d\mu_n^k(z_1,\ldots,z_k)=\int_{\C}f(z_1)F(z_1)dA(z_1),
$$
where
\begin{align*}
F(z_1)&=\int_{\mathbb{C}^{k-1}}\prod_{j=2}^{k}f_j(z_j)K_n(z_1,z_2)\cdots K_n(z_k,z_1)d\mu_n^{k-1}(z_2,\ldots ,z_k) e^{-n\lvert z_1\rvert^2}
\\ &=P_n[f_2P_n[\ldots P_n[f_k K_{n,z_1}]\ldots]](z_1)e^{-n\lvert z_1\rvert^2}.
\end{align*}
Introducing the function
$$
F(z_1,z)=P_n[f_2P_n[\ldots P_n[f_k K_{n,z_1}]\ldots]](z)e^{-n\lvert z\rvert^2},
$$
we may write $F(z_1)=F(z_1,z_1)$. Now,
$$
\left\lvert F(z_1,z)\right\rvert \leq \sqrt{K_n(z,z)}\lVert F(z_1,\cdot)\rVert_{L^2(d\mu_n)}e^{-n\lvert z\rvert^2}\leq \sqrt{K_n(z,z)}\sqrt{K_n(z_1,z_1)}\left[\prod_{j=2}^k \lVert f_j\rVert_\infty\right] e^{-n\lvert z\rvert^2}
$$
where the first inequality is by Cauchy-Schwarz, and the second follows since projections have norm one.
It follows that
$$
\left\lvert\int_\C f_1(z_1)F(z_1)dA(z_1)\right\rvert\leq \prod_{j=2}^k\lVert f_j\rVert_\infty \int_\C\lvert f_1(z_1)\rvert K_n(z_1,z_1)e^{-n\lvert z_1\rvert^2}dA(z_1),
$$
which by the trivial estimate $K_n(z,z)e^{-n\lvert z\rvert^2}\leq n$ proves the assertion.
\end{proof}
The following lemma deals with the term corresponding to $j=1$ in \eqref{eq:cumulantseries}.
\begin{lemma}\label{lem:diagonal}
Let $F(z_1,\ldots, z_k)$ function of the form
$$
F(z_1,\ldots, z_k)=\sum_{\alpha=1}^M\prod_{j\leq k}f_{\alpha,j}(z_j), \quad M \geq 1
$$
where $f_{\alpha, j} \in C^2(\C) \cap L^{\infty}(\C)$. Assume furthermore that
$$
F(z,z,\ldots,z)=0,\quad z\in\C.
$$
Then
$$
n^{-1}\int_{\C^k}F(z_1,\ldots,z_k)R_{k,n}(z_1,\ldots,z_k)d\mu_n^k(z_1,\ldots,z_k)=o(1).
$$
\end{lemma}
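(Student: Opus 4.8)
The plan is to estimate the integral separately over the three regions $\Lambda_n$, $\Gamma_n$ and $\Omega_n$ into which $\C^k$ has just been split, using the vanishing of $F$ on the diagonal only on the near-diagonal region $\Lambda_n$.

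On $\Omega_n$ there is nothing to do: $F$ is bounded, being a finite sum of products of bounded functions, so the estimate on $\Omega_n$ proved above gives $\int_{\Omega_n}F R_{k,n}\,d\mu_n^k=O(n^{-1})$, and after the factor $n^{-1}$ this leaves $O(n^{-2})$. On $\Gamma_n=\{d(z_1,\partial\D)\le\epsilon_n\}$ I would apply Proposition \ref{prop:projection-est} to each of the $M$ product terms $\prod_{j\le k}f_{\alpha,j}(z_j)$ separately: the factor $f_{\alpha,1}$ restricted to the annulus $\{1-\epsilon_n\le|z_1|\le 1+\epsilon_n\}$ has $L^1(\C)$-norm at most $\lVert f_{\alpha,1}\rVert_\infty$ times the area $O(\epsilon_n)$ of that annulus, so Proposition \ref{prop:projection-est} bounds the term by $n\cdot O(\epsilon_n)\cdot\prod_{j\ge 2}\lVert f_{\alpha,j}\rVert_\infty$. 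Summing over $\alpha$ and dividing by $n$ gives $O(\epsilon_n)=O(\log n/\sqrt n)=o(1)$.

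The heart of the matter is the bound on $\Lambda_n$, where I would combine a pointwise estimate on $F$ with a concentration estimate on $R_{k,n}$. Since $F$ vanishes on the diagonal and is $C^1$, replacing $z_1,\dots,z_1$ by $z_1,z_2,\dots,z_k$ one coordinate at a time and using that the first-order Wirtinger derivatives of $F$ are bounded on the compact set $\overline{2\D}^k\supseteq\overline{\Lambda_n}$, I obtain $|F(z_1,\dots,z_k)|\le C\sum_{j=2}^k|z_j-z_1|\le C(k-1)\epsilon_n$ for $(z_1,\dots,z_k)\in\Lambda_n$. For the kernel, the key observation is that on $\Lambda_n$ one has $d(z_i,\partial\D)\ge\epsilon_n/2\ge\tfrac12|z_i-z_{i+1}|$ for every $i$ (indices taken cyclically), so the off-diagonal estimate \eqref{eq:off-diag} applies at each factor and the minimum occurring there is at least $\tfrac12|z_i-z_{i+1}|$; multiplying the $k$ resulting bounds yields $|R_{k,n}(z)|\,e^{-n\sum_i|z_i|^2}\le C^{k/2}n^k\prod_{i=1}^k e^{-\frac c4\sqrt n\,|z_i-z_{i+1}|}$. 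Substituting $z_1=w$ and $z_j=w+u_j$ for $j\ge 2$ and integrating the resulting chain integral one variable at a time, using $\int_\C e^{-\frac c4\sqrt n\,|u|}\,dA(u)=O(n^{-1})$ at each of the $k-1$ integrations while the $w$-integral over $(1-\epsilon_n)\D$ only costs $O(1)$, I get $\int_{\Lambda_n}|R_{k,n}|\,d\mu_n^k=O\big(n^k\,n^{-(k-1)}\big)=O(n)$. Therefore $n^{-1}\int_{\Lambda_n}|F|\,|R_{k,n}|\,d\mu_n^k\le n^{-1}\cdot O(\epsilon_n)\cdot O(n)=O(\epsilon_n)=o(1)$. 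Adding the three contributions proves the lemma; throughout, $M_k$ is taken large enough for the estimate on $\Omega_n$ above to apply.

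The one place I expect to need care is the chain-integral bound $\int_{\Lambda_n}|R_{k,n}|\,d\mu_n^k=O(n)$: one must check that the factor $n^k$ coming from the $k$ applications of \eqref{eq:off-diag} is compensated by exactly $k-1$ genuine integrations — one coordinate is pinned down, namely $z_1=w$ — leaving a single surplus power of $n$, and that the boundary term $d(z_i,\partial\D)$ inside the minimum in \eqref{eq:off-diag} is harmless, which is precisely why $\Lambda_n$ was defined with the bulk cut-off $z_1\in(1-\epsilon_n)\D$. The rest is routine bookkeeping. Note that only first-order vanishing of $F$ is used, so the hypothesis $f_{\alpha,j}\in C^2$ is stronger than needed; if the concentration bound were weaker one could instead Taylor expand $F$ to second order about the diagonal and extract cancellation from the first-order term via the reproducing identity for $K_n$, but this refinement is not required here.
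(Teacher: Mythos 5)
Your proof is correct and follows essentially the same route as the paper: the same three-region decomposition, the same use of Proposition~\ref{prop:projection-est} with the $O(\epsilon_n)$ bound on $\lVert f_1\rVert_{L^1}$ over the boundary strip $\Gamma_n$, and the same first-order Taylor bound $|F|=O(\epsilon_n)$ on $\Lambda_n$ extracted from the diagonal vanishing. The only (harmless) deviation is on $\Lambda_n$, where the paper simply pairs the crude bound $|R_{k,n}|e^{-n\sum_i|z_i|^2}\le n^k$ with $\mathrm{Vol}(\Lambda_n)=O(\epsilon_n^{2(k-1)})$ to get $o(n)$, whereas you invoke the off-diagonal decay \eqref{eq:off-diag} and a chain integration; both give the required $o(n)$, yours with a marginally better power of $\log n$.
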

\begin{proof} 
We split the integral over $\C^k$ as
\begin{align*}
\int_{\C^k}F R_{n,k}d\mu_n^k=\int_{\Lambda_n}F R_{n,k}d\mu_n^k+\int_{\Gamma_n}F R_{n,k}d\mu_n^k+O(n^{-1}).
\end{align*}
Turning to the integral over $\Gamma_n$, we may assume without loss of generality that $F(z_1,\ldots,z_k)=f_1(z_1)\cdots f_k(z_k)$ for some bounded continuous functions $f_j$, since vanishing of $F$ on the diagonal will not be used in this case. 
It follows immediately from Proposition~\ref{prop:projection-est} that
\begin{align*}
\int_{\Gamma_n}F R_{k,n}d\mu_n 
&=\int_{\C^k}f_1(z_1) \,1_{ \{ \mathrm{dist}(z_1, \partial \mathbb{D}) \leq M_k \log n / \sqrt{n} \}}(z_1)  \prod_{j=2}^kf_j(z_j)R_{k,n}(z_1, \ldots, z_k) d\mu_n^k(z_1, \ldots, z_k) \\
&\leq O \big(n\lVert f_1 1_{ \{ \mathrm{dist}(z, \partial \mathbb{D}) \leq M_k \log n / \sqrt{n} \}} \rVert_{L^1} \big) =O(\sqrt{n}\log n).
\end{align*}
For the integral over $\Lambda_n$, we estimate $F$ near $\Lambda_n$ by a Taylor expansion, and obtain
$$
\delta_n:=\lVert F_{\vert\Lambda_n}\rVert_\infty =(k-1)\max_{2\leq j\leq k}( \lVert \partial_{z_j} F \rVert_{L^\infty( \Lambda_n)} + \lVert \bar  \partial_{z_j} F \rVert_{L^\infty(\Lambda_n)})  \frac{\epsilon_n}{2}+O(\epsilon_n^2)=O(\epsilon_n).
$$
The crude estimate $\lvert R_{k,n} \rvert e^{-n|z_1|^2- \ldots- n|z_k|^2} \leq n^k$ and $Vol(\Lambda_n)=O(\epsilon_n^{2(k-1)})$ then gives
$$
\int_{\Lambda_n}\lvert F\rvert R_{k,n}d\mu_n\leq \int_{\Lambda_n}C\delta_n n^k=O\left(\delta_n n (M_k\log n)^{2k-2}\right)=O\left(n\frac{(M_k\log n)^{2k-1}}{\sqrt{n}}\right)=o(n),
$$
which completes the proof.
\end{proof}

\section{Proof of the main theorem} \label{sec:proof} 
\begin{proof}[Proof of Theorem \ref{maintheorem}]
We will start by proving \eqref{eq:purevariance}. By \eqref{eq:purecumu}, we know that 
$C_k(\trace_{\delta; n,q})$ can be written as a finite series of the form \eqref{eq:cumulantseries}. 
By Proposition \ref{prop:projection-est}, it is enough pick the terms corresponding 
to the powers $n^0$ and $n^{-1}$ from the expression
$$
L^0_{q-1}(n^{-1}\Delta_{z_1})\cdots L^0_{q-1}(n^{-1}\Delta_{z_k}) G_k(z_1,\ldots,z_k).
$$
We get
\begin{align} \label{eq:cumulant_mainterms}
&C_k(\trace_{\delta, n,q}) \\
&=\int_{\C^k}\left(I+n^{-1}(q-1)\sum_{j=1}^k\Delta_{z_j}\right)G_k(z_1,\ldots,z_k)R_{k,n}(z_1,\ldots,z_k)d\mu_n^k(z_1,\ldots,z_k)+O(n^{-1}). \nonumber
\end{align}
Suppose first that $k \geq 3$. The term corresponding to the identity operator $I$ in this formula is just the cumulant of order $k$ in the analytic case $q=1$. By the result of Rider and Vir\'{a}g \cite{rider2007noise}, 
this expression vanishes as $n \to \infty$. For the $n^{-1}$ term, it was shown in Lemma 3.3 of \cite{ameur2011fluctuations} that $\sum_{j=1}^k \Delta_{z_j} G_k(z_1, \ldots, z_k)$ vanishes on the diagonal $z_1 = \cdots = z_k$.  
The claim now follows by Lemma \ref{lem:diagonal}. 
 
It remains to calculate the second cumulant, which is equal to the variance. We use 
\eqref{eq:cumulant_mainterms} again. Recall that 
$G_2(z,w)= \frac12 (f(z)-f(w))^2$. By the result of \cite{rider2007noise}, we have 
\begin{equation} \label{eq:ridvir-variance}
\int_{\C^k}G_2(z_1, z_2)R_{2,n}(z_1,z_2)d\mu_n^k(z_1,z_2) \to 
\lVert \bar\partial g\rVert_{L^2(\D)}^2+\frac12 \lVert g\rVert_{H^{1/2}}^2.
\end{equation}
To analyse the terms involving the Laplacians in \eqref{eq:cumulant_mainterms}, we calculate
$$
(\Delta_z+\Delta_w)(g(z)-g(w))^2=2(\Delta g(z)-\Delta g(w))(g(z)-g(w))+2\lvert \bar\partial g(z)\rvert^2+2\lvert \bar\partial g(w)\rvert^2.
$$
We now observe that  $2(\Delta g(z)-\Delta g(w))(g(z)-g(w))$ vanishes when $z=w$. With Lemma \ref{lem:diagonal} we now have
\begin{align*} 
&C_k(\trace_{\delta, n,q})\\
&= \lVert \bar\partial g\rVert_{L^2(\D)}^2+\frac12 \lVert g\rVert_{H^{1/2}}^2+ \frac{1}{2n} (q-1) \int_{\C^2}
(2\lvert \bar\partial g(z)\rvert^2+2\lvert \bar\partial g(w)\rvert^2 \lvert K_n(z,w) \lvert^2 d \mu^2(z_1, z_2) + \mathrm{o}(1) \\
&= \lVert \bar\partial g\rVert_{L^2(\D)}^2+\frac12 \lVert g\rVert_{H^{1/2}}^2 + 
\frac{2(q-1)}{n} \int_{\C^2} \lvert \bar \partial g(z)\lvert^2 \lvert K_n(z_1, z_2) \lvert^2 d \mu_n^2(z_1, z_2) + \mathrm{o}(1) \\
&=\lVert \bar\partial g\rVert_{L^2(\D)}^2+\frac12 \lVert g\rVert_{H^{1/2}}^2 + 
\frac{2(q-1)}{n} \int_{\C} \lvert \bar \partial g(z)\lvert^2 K_n(z,z) d \mu_n(z) + \mathrm{o}(1) \\
&\to (2q-1) \int_{\mathbb{D}} \lvert \bar \partial g \lvert^2 dA+ \frac12 \lVert g\rVert_{H^{1/2}}^2.
\end{align*}
To obtain the last limit, we used that $\frac1n K_n(z,z) d\mu_n(z) \to 1_{\mathbb{D}}$ weakly. 

We will now turn to the proof of \eqref{eq:fullvariance}. We have 
\begin{align*}
&C_k(\trace_{q, n})\\
&= \sum_{1 \leq i_1, \cdots, i_k \leq q-1} \frac{n^{-i_1 - \ldots - i_k}}{i_1! \cdots i_k} \int_{\C^k} G_k(z_1, \dots, z_k) [T_{n,1}]_{z_1}^{i_1} \overline{[T_{n,1}]_{z_2}^{i_1}}K_n(z_1, z_2)
\cdots [T_{n,1}]_{z_k}^{i_k} \overline{[T_{n,1}]_{z_1}^{i_k}}K_n(z_k, z_1) d \mu^k 
\end{align*}
As with pure polyanalytic kernels, we assume first that $k \geq 3$. Let us now fix a multi-index $(i_1, \ldots, i_k)$ and estimate the corresponding term in the previous expression. By 
Proposition \ref{prop:crossterms}, we get
\begin{align*}
\frac{n^{-i_1 - \ldots - i_k}}{i_1! \cdots i_k} \int_{\C^k} G_k(z_1, \dots, z_k) [T_{n,1}]_{z_1}^{i_1} \overline{[T_{n,1}]_{z_2}^{i_1}}K_n(z_1, z_2)
\cdots [T_{n,1}]_{z_k}^{i_k} \overline{[T_{n,1}]_{z_1}^{i_k}}K(z_k, z_1) d \mu^k \\
= \frac{n^{-i_1 - \ldots - i_k}}{i_1! \cdots i_k} \int_{\C^k} \mathcal{D}_{i_1, \ldots, i_k, n} G_k(z_1, \ldots, z_k) K_n(z_1, z_2) \cdots
K_n(z_k, z_1) d \mu^k(z_1, \ldots, z_k) 
\end{align*}
We have 
$$
 n^{-i_1 - \ldots - i_k} \mathcal{D}_{i_1, \ldots, i_k, n} G_k(z_1, \ldots, z_k) = 
 \mathrm{O}(n^{I(i_1, \ldots, i_k)}),
 $$
where
$$
I(i_1, \ldots, i_k) := -i_1- \ldots - i_k+ \mathrm{min}(i_1, i_k) + \mathrm{min}(i_1, i_2)+ \ldots +\mathrm{min}(i_{k-1}, i_k).
$$
Clearly, $I(i_1, \ldots, i_k) \leq 0$. We have that $I(i_1, \ldots, i_k)=0$ if and only if $i_1= \ldots= i_k$, and this case is treated above. The terms where $I(i_1, \ldots, i_k) \leq -2$ are negligible by proposition \ref{prop:projection-est}, so the only case that needs further analysis is 
$I(i_1, \ldots, i_k)=-1$. This happens exactly when there exists a unique $l \in \{ 1, 2, \ldots, k \}$ such that $i_l< i_{l+1}$ and for this $l$ it holds $i_l = i_{l+1}-1$. Here and later, we identify $i_{k+j}$ with $i_j$. This condition on the multi-index $(i_1, \ldots, i_k)$ can be rephrased as follows:  $(i_1, \ldots, i_k)= \mathbf{d}(j,r, i)$ for some $1 \leq j \leq k$, 
$0 \leq r \leq k-2$ and $1 \leq i \leq q-1$, where $\mathbf{d}(j,r, i)$ is defined to be the multi-index $(i_1, \ldots, i_k)$ 
which satisfies $i_m=i$ for $m \in \{ j, \ldots, j+r \}$
and $i_m = i-1$ for $i_m \in \{i_1, \ldots, i_k \} \setminus \{i_j, \ldots, i_{j+r} \}$. 
If  $(i_1, \ldots, i_k)= \mathbf{d}(j,r, i)$, we have
\begin{align} \label{eq:maincontr}
 &\frac{n^{-i_1-\ldots-i_k} }{i_1! \cdots i_k!} \mathcal{D}_{i_1, \ldots, i_k, n} G_k(z_1, \ldots, z_k) \\
 &= \frac{n^{-1} i_1! \cdots i_{j+r}!(i_{j+r+1}+1)! \cdots i_k!}{i_1! \cdots i_k}  \partial_{z_{j}} \bar \partial_{z_{j+r+1}} G_k(z_1, \ldots, z_k) + \mathrm{O}(n^{-2})  \nonumber \\
 &= n^{-1}i \partial_{z_{j}} \bar \partial_{z_{j+r+1}} G_k(z_1, \ldots, z_k) + \mathrm{O}(n^{-2}), \nonumber
\end{align}
where we collected from each operator $\mathcal{D}_{i_j, i_{j+1}, n}$ the term which has the highest degree in $n$. The $\mathrm{O}(n^{-2})$ are negligible by Proposition \ref{prop:projection-est}.
Summing up the contributions from all such $(i_1, \ldots, i_k)$, we can write the $k$'th cumulant as 
\begin{align*}
&\sum_{j=1}^{q}\int_{\C^k} G_k(z_1, \ldots, z_k) K_{\delta; j,n}(z_1, z_2) \cdots K_{\delta; j,n}(z_k,z_1)d \mu^k_n(z_1, \ldots, z_k) \\
&+ \int_{\C^k} \big[\sum_{j=1}^{k} \sum_{r=0}^{k-2} \sum_{i=1}^{q-1} i \partial_j \bar \partial_{j+r+1} \big] G_k(z_1, \ldots, z_k) R_{k,n}(z_1, \ldots, z_k) d \mu^k_n(z_1, \ldots, z_k)
+ \mathrm{O}(n^{-1}) \\
=  &\sum_{j=1}^{q}\int_{\C^k} G_k(z_1, \ldots, z_k) K_{\delta; j,n}(z_1, z_2) \cdots K_{\delta; j,n}(z_k,z_1)d \mu^k_n(z_1, \ldots, z_k) \\
&+\frac{(q-1)(q-2)}{2} \int_{\C^k}  \sum_{j \neq l} \partial_j \bar \partial_l G_k(z_1, \ldots, z_k) \big] R_{k,n}(z_1, \ldots, z_k) d \mu^k_n(z_1, \ldots, z_k) + \mathrm{O}(n^{-1})
\end{align*}
By Lemma 3.4 in \cite{ameur2011fluctuations}, $\sum_{l \neq j} \partial_j \bar \partial_l G_k(z_1, \ldots, z_k)$ vanishes on $z_1= \ldots = z_k$. This together with lemma \ref{lem:diagonal} and the result in pure polyanalytic case proves that the cumulants of orders $k \geq 3$ vanish in the limit. 

Next, we calculate the limiting variance. 
Suppose $0 \leq i_1 \leq i_2 \leq q-1$.  Again, we only need to analyse the case 
$I(i_1, i_2)= -1$, and this happens exactly when $i_2= i_1+1$. Now, using \eqref{eq:maincontr} again,
\begin{align*}
&\frac{n^{-2i_1-1 }}{i_1! (i_1+1)!} \int_{\mathbb{C}^2} G_2(z_1, z_2) [T_n]_{z_1}^{i_1} \overline{[T_n]_{z_2}^{i_1}}K(z_1, z_2) 
[T_n]_{z_2}^{i_1+1} \overline{[T_n]_{z_1}^{i_1+1}}K(z_2, z_1) d \mu^2(z_1, z_2) \\
&= n^{-1} (i_1+1) \int_{\mathbb{C}^2} \bar \partial_{z_1}  \partial_{z_2} G_2(z_1, z_2) 
K_n(z_1, z_2) K_n(z_2, z_1) d \mu^2(z_1, z_2)   + \mathrm{O}(n^{-1}) \\
&=- n^{-1}(i_1+1) \int_{\C^2} \bar \partial_{z_1}g(z_1)\partial_{z_2}g(z_2)|K_n(z_1, z_2)|^2 d \mu^2(z_1,z_2) + \mathrm{O}(n^{-1}) \\ 
&= - n^{-1}(i_1+1) \int_{\C^2} \bigg[\bar \partial_{z_1}g(z_1)(\partial_{z_2}g(z_2)- \partial_{z_1}g(z_1)) + |\bar \partial_{z_1} g(z_1)|^2 \bigg] K_n(z_1, z_2)|^2 d \mu^2(z_1,z_2) + \mathrm{O}(n^{-1}) \\ 
&= -(i_1+1) \int_{\mathbb{D}} |\bar \partial g(z)|^2 K_n(z,z)d\mu_n(z) + \mathrm{o}(1) 
\to -(i_1+1) \int_{\mathbb{D}} |\bar \partial g(z)|^2 dA(z)
\end{align*}
Here, the second last equality followed from Lemma \ref{lem:diagonal} and integrating out the second variable. For the last limit, we used again the weak convergence
$K_n(z,z)d\mu_n(z) \to 1_{\mathbb{D}}$. Summing up all such contributions gives
\begin{align*}
&\int_{\C^2} G_2(z_1, z_2) |K_{q,n}|^2 d \mu^2(z_1, z_2) \\
&= \int_{\C^2} G_2(z_1, z_2)  \bigg[ \sum_{j=1}^q |K_{\delta; j,n}|^2 + 2 \mathrm{Re} \sum_{1 \leq j < l \leq k} K_{\delta; j,n}(z_1, z_2) K_{\delta; l, n}(z_2, z_1) \bigg] d \mu(z_1, z_2) \\
&\to \sum_{j=1}^q (2j-1) \| \bar \partial g \|_{L^2{D})}^2 + \frac{q}{2} \|g \|^2_{H^{1/2}(\partial \mathbb{D})} - 2\sum_{j=0}^{q-2} (j+1) \| \bar \partial g \|^2= q(\| \bar \partial g \|_{L^2(\D)}^2 + \frac12 \|g \|^2_{H^{1/2}(\partial \mathbb{D})})
\end{align*}
\end{proof}
%
%
%
%
 \bibliography{fluct9}
\bibliographystyle{amsplain}

\end{document}